\newtheorem{example}{Example}
\newtheorem{definition}{Definition}
\newtheorem{proposition}{Proposition}
\newenvironment{proof}{\hspace{8pt}\ti{Proof:}}{$\blacksquare$}
\newtheorem{remark}{Remark}
\newcommand{\mprop}{Proposition~\ref{prop:sac_uns}\xspace}
\newcommand{\vic}[1]{$#1$-vicinity\xspace}
\newcommand{\lpr}[1]{$#1$-certificate\xspace}
\newcommand{\lprs}[1]{$#1$-certificates\xspace}
\newcommand{\imsat}{\mbox{$\mi{SAT}^{\mi{sa}}_{im}$}\xspace}
\newcommand{\sasat}{\mbox{$\mi{SAT}^{\mi{sa}}$}\xspace}
\newcommand{\sapqe}{\mbox{$\mi{PQE}^{\mi{sa}}$}\xspace}
\newcommand{\ppr}[1]{\mbox{$\mi{AppPr6}^*$}\xspace}
\newcommand{\slt}[2]{\mbox{$\mi{SatLits}(#1,\pnt{#2})$}\xspace}
\newcommand{\sslt}[2]{\mbox{$\mi{SatLits}(#1,\pent{#2}{^*})$}\xspace}
\newcommand{\dpqe}{\mbox{$\mi{DS}$-$\mi{PQE}$}\xspace}
\newcommand{\ods}[2]{\mbox{$#1 \rightarrow #2$}}
\newcommand{\oods}[2]{\mbox{\pnt{#1}~$\rightarrow #2$}}
\newcommand{\bm}[1]{{\mbox{\boldmath $#1$}}}
\newcommand{\Bm}[1]{{\boldmath $#1$}}
\newcommand{\di}[1]{\mbox{$\mi{Diam}(#1)$}\xspace}
\newcommand{\rl}[1]{\mbox{$\mi{Rel}_{#1}$}\xspace}
\newcommand{\imp}{\Rightarrow} 
\newcommand{\ie}{i.e.,\xspace}
\newcommand{\pnt}[1]{\mbox{$\vv{#1}$}\xspace}
\newcommand{\ppnt}[2]{\mbox{$\vv{#1}\!_{#2}$}}
\newcommand{\pent}[2]{\mbox{$\vv{#1}#2$}}
\newcommand{\cof}[2]{\mbox{$#1_{\vec{#2}}$}}
\newcommand{\V}[1]{\mbox{$\mathit{Vars}(#1)$}}
\newcommand{\Va}[1]{\mbox{$\mi{Vars}(\vec{#1})$}}
\newcommand{\s}[1]{\mbox{$\{#1\}$}}
\newcommand{\nGz}[2]{$G_{non-\{z\}}$}
\newcommand{\prr}[1]{\mi{Prev}(\boldsymbol{q})}
\newcommand{\mi}[1]{\mathit{#1}}
\newcommand{\ti}[1]{\textit{#1}}
\newcommand{\tb}[1]{\textbf{#1}}
\newcommand{\ttt}{\>\>\>}
\newcommand{\tttt}{\>\>\>\>}
\newcommand{\Tt}{\>\>}
\newcommand{\Sub}[2]{\mbox{$\mi{#1}\!_{\mi{#2}}$}}
\newcommand{\prob}[2]{\mbox{$\exists{#1} [#2]$}}
\newcommand{\Prob}[3]{\mbox{$\exists{#1}\exists{#2}[#3]$}}
\newcommand{\Comment}[1]{}
\newcommand{\cp}{\mbox{$\mi{CP}^{\mi{pqe}}$}\xspace}
\newcommand{\aabs}[2]{\mbox{$S_{#1,#2}$}}
\newcommand{\Abs}[1]{\mbox{$S_{1,#1}$}}
\begin{document}

%\title{Structure-Aware Computing By Partial Quantifier Elimination}
%\title{Structure-Aware Computing And Partial Quantifier Elimination}
\title{Structure-Aware Computing, Partial Quantifier Elimination And SAT}

\author{\IEEEauthorblockN{Eugene Goldberg} 
\IEEEauthorblockA{
email:
eu.goldberg@gmail.com}}

%% \author{Eugene Goldberg}
%% \institute{\email{eu.goldberg@gmail.com}}
%% \vspace{-30pt}
%% \author{}
%% \institute{}

\maketitle

\begin{abstract}
Typically, a practical algorithm of hardware verification obtains a
\ti{semantic} result by being applied to a \ti{particular} formula
$F$. That is, although this algorithm uses the specifics of $F$
(sometimes \ti{inadvertently}), its result holds for all formulas
logically equivalent to $F$. We refer to computations that get a
semantic result by \textit{intentionally} exploiting the specifics of
$F$ as \ti{structure-aware computing} (SAC). Arguably, using SAC
allows one to get a semantic result faster. We show that
\textit{partial quantifier elimination} (PQE), a generalization of
quantifier elimination, can be used for SAC. The objective of this
paper is twofold. First, we explain how SAC is performed by PQE by the
example of three verification problems (property generation,
equivalence checking and model checking). Second, we argue that PQE
solving \ti{itself} can benefit from SAC. We use SAT solving to
introduce a technique that can be also applied in structure-aware PQE
solving.
\end{abstract}

%\vspace{-2pt}
\section{Introduction}
\label{sec:intro}
A practical algorithm of hardware verification usually produces a
\ti{semantic} result by being applied to a \ti{particular} formula and
using its specifics (often inadvertently).  Suppose, for instance,
that a simple SAT solver performing the DPLL procedure~\cite{dpll}
proves a formula $F$ unsatisfiable. Although it produces a semantic
result (\ie this result holds for all formulas logically equivalent to
$F$), it also unwittingly uses the structure of $F$ via applying
BCP. We will say that an algorithm producing a semantic result
performs \tb{structure-aware computing} if, to achieve greater
efficiency, it \ti{intentionally} exploits the structure of the
formula at hand.  The objective of this paper is twofold. First, we
want to show that structure-aware computing can be done by partial
quantifier elimination (\tb{PQE}). Second, we want to substantiate the
point that PQE solving itself can benefit from greater
structure-awareness.

In this paper, we consider only \ti{propositional} formulas in
\ti{conjunctive normal form} (CNF) and only \ti{existential}
quantifiers. PQE is a generalization of regular quantifier elimination
(QE) defined as follows~\cite{hvc-14}.  Let $F(X,Y)$ be a
quantifier-free formula where $X,Y$ are sets of variables and $G$ be a
subset of clauses\footnote{Given a CNF formula $F$ represented as the conjunction of clauses
$C_1 \wedge \dots \wedge C_k$, we will also consider $F$ as
the \ti{set} of clauses \s{C_1,\dots,C_k}.
} of $F$.  Given a formula
\prob{X}{F}, the PQE problem is to find a quantifier-free formula
$H(Y)$ such that $\prob{X}{F}\equiv H\wedge\prob{X}{F \setminus G}$.
In contrast to \ti{full} QE, only the clauses of $G$ are taken out of
the scope of quantifiers hence the name \ti{partial} QE.  Note that QE
is just a special case of PQE where $G = F$ and the entire formula
gets unquantified.  A key role in PQE solving plays \ti{redundancy
  based reasoning}: to take a set of clauses $G$ out of
\prob{X}{F(X,Y)}, one essentially needs to find a formula $H(Y)$ that
makes $G$ \ti{redundant} in $H \wedge \prob{X}{F}$.  The appeal of PQE
is that it can be \ti{much more efficient} than QE if $G$ is a small
piece of $F$. To solve PQE, one needs to make redundant only $G$
whereas in QE the \ti{entire} formula $F$ is redundant in $H \wedge
\prob{X}{F}$.
%So, it is beneficial to design algorithms based on PQE.

The \tb{idea} of structure-aware computing by PQE is based on the
following observation. QE is a \ti{semantic} operation in the sense
that if $\prob{X}{F(X,Y)} \equiv H(Y)$ and $F' \equiv F$, then
$\prob{X}{F'} \equiv H$. That is, to verify the correctness of $H$ it
suffices to know the truth table of $F$. On the other hand, PQE is a
\ti{structural} (\ie formula-specific) operation. Assume, for
instance, that $F \equiv F'$ and both $F$ and $F'$ contain a subset of
clauses $G$.  The fact that $\prob{X}{F} \equiv H \wedge \prob{X}{F
  \setminus G}$ \ti{does not} imply that $\prob{X}{F'} \equiv H \wedge
\prob{X}{F' \setminus G}$.  For instance, $G$ may be redundant in
\prob{X}{F} (and so $H \equiv 1$ and $\prob{X}{F} \equiv \prob{X}{F
  \setminus G}$) but not in \prob{X}{F'}.  Thus, one cannot prove $H$
correct using the truth table of $F$ alone because the correctness of
$H$ depends on the particulars of $F$.  That is $H$ is
\ti{formula-specific}. In this sense, PQE is similar to
\ti{interpolation}~\cite{craig,ken03} that is a structural operation
too.

Besides introducing the idea of structure-aware computing by PQE our
contribution is as follows. \ti{First}, we show how structure-aware
computing by PQE works in two existing methods (property generation
and equivalence checking,
Sections~\ref{sec:prop_gen},\ref{sec:eq_check}) and a new method
performing an instance of model checking
(Section~\ref{sec:pqe_mc}). Although the previous experimental results
of~\cite{fmcad16,cav23} recalled in this paper show the viability of
existing PQE solvers, their performance can and should be drastically
improved. So, one of the goals of this paper is to emphasize the
importance of PQE to inspire building new efficient PQE
solvers. \ti{Second}, we show that PQE \ti{itself} can be made more
structure-aware (Section~\ref{sec:pqe_flaws}). \ti{Third}, we present
a technique for structure-aware SAT solving
(Sections~\ref{sec:sac_sat_gen}\,-\ref{sec:sac_sat_alg}). Due to the
tight relation between SAT and PQE, this technique can be used in
structure-aware PQE solving. One can also employ it for building
structure-aware SAT solvers tuned to particular applications.

The main body of this paper is structured as follows. (Some additional
information is given in the appendix.)  In Section~\ref{sec:basic}, we
provide basic definitions. A high-level view of PQE solving is
presented in Section~\ref{sec:pqe_alg}. Section~\ref{sec:interp} shows
that interpolation can be viewed as a special case of PQE. As
mentioned above, Sections~\ref{sec:prop_gen}\,-\ref{sec:sac_sat_alg}
discuss various aspects of structure-aware computing by PQE and
structure-aware SAT solving.  In Section~\ref{sec:concl}, we make
conclusions.
\clearpage

%\section{Redundant Clauses, Boundary Points, And Quantifier Elimination}
\section{Basic Definitions}
%\flushbottom
\label{sec:basic}

In this section, when we say ``formula'' without mentioning
quantifiers, we mean ``a quantifier-free formula''.

\begin{definition}
\label{def:cnf}
We assume that formulas have only Boolean variables.  A \tb{literal}
of a variable $v$ is either $v$ or its negation $\overline{v}$.  A
\tb{clause} is a disjunction of literals. A formula $F$ is in
conjunctive normal form (\tb{CNF}) if $F = C_1 \wedge \dots \wedge
C_k$ where $C_1,\dots,C_k$ are clauses. We will also view $F$ as the
\tb{set of clauses} \s{C_1,\dots,C_k}. We assume that \tb{every
  formula is in CNF} unless otherwise stated.
\end{definition}

%
%  Vars(F)
%
%\vspace{-10pt}
\begin{definition}
  \label{def:vars} Let $F$ be a formula. Then \bm{\V{F}} denotes the
set of variables of $F$ and \bm{\V{\prob{X}{F}}} denotes
$\V{F}\!\setminus\!X$.
\end{definition}

%
% Assignment
%
%\vspace{-10pt}
\begin{definition}
Let $V$ be a set of variables. An \tb{assignment} \pnt{q} to $V$ is a
mapping $V'~\rightarrow \s{0,1}$ where $V' \subseteq V$.  We will
denote the set of variables assigned in \pnt{q}~~as \bm{\Va{q}}. We will
refer to \pnt{q} as a \tb{full assignment} to $V$ if $\Va{q}=V$.
\end{definition}

%
% satisfied/falsified
%
\begin{definition}
A literal, a clause and a formula are \tb{satisfied} (respectively
\tb{falsified}) by an assignment \pnt{q} if they evaluate to 1
(respectively 0) under \pnt{q}.
\end{definition}

%
%  Cofactor
%
%\vspace{-10pt}
\begin{definition}
\label{def:cofactor}
Let $C$ be a clause. Let $H$ be a formula that may have quantifiers,
and \pnt{q} be an assignment to \V{H}.  Denote by \bm{\cof{C}{q}} the
clause $C$ under assignment \pnt{q}. If $C$ is satisfied by \pnt{q},
then $\cof{C}{q} \equiv 1$. Otherwise, \cof{C}{q} is the clause
obtained from $C$ by removing all literals falsified by
\pnt{q}. Denote by \bm{\cof{H}{q}} the formula obtained from $H$ by
removing the clauses satisfied by \pnt{q} and replacing every clause
$C$ unsatisfied by \pnt{q} with \cof{C}{q}.
\end{definition}

%
% X-clause, non-X-clause
%
%\vspace{-10pt}
\begin{definition}
  \label{def:Xcls}
Given a formula \prob{X}{F}, a clause $C$ of $F$ is called
\tb{quantified} if \V{C} $\cap\, X \neq \emptyset$. Otherwise, $C$ is
called \tb{unquantified}.
\end{definition}

%
%   Formula equivalence
%
%\vspace{-10pt}
\begin{definition}
\label{def:formula-equiv}
Let $F', F''$ be formulas that may have existential quantifiers. We say
that $F', F''$ are \tb{equivalent}, written \bm{F' \equiv F''}, if $\cof{F'}{q} =
\cof{F''}{q}$ for all full assignments \pnt{q} to $\V{F'} \cup \V{F''}$.
\end{definition}

%
% redundant clauses
%
%\vspace{-10pt}
\begin{definition}
\label{def:red_cls}
Let $F(X,Y)$ be a formula and $G \subseteq F$ and $G \neq
\emptyset$. The clauses of $G$ are said to be \textbf{redundant in}
\bm{\prob{X}{F}} if $\prob{X}{F} \equiv \prob{X}{F \setminus G}$. If
$F \setminus G$ implies $G$, the clauses of $G$ are redundant in
\prob{X}{F} but the reverse is not true.
\end{definition}

%
%   PQE problem
%
%\vspace{-10pt}
\begin{definition}
\label{def:pqe_prob} Given a formula \prob{X}{F(X,Y))} and $G$ where
 $G \subseteq F$, the \tb{Partial Quantifier Elimination} (\tb{PQE})
problem is to find $H(Y)$ such that \Bm{\prob{X}{F}\equiv
  H\wedge\prob{X}{F \setminus G}}.  (So, PQE takes $G$ out of the
scope of quantifiers.)  The formula $H$ is called a \tb{solution} to
PQE. The special case of PQE where $G = F$ is called \tb{Quantifier
  Elimination} (\tb{QE}).
\end{definition}

%
% example
%
\begin{example}
\label{exmp:pqe_exmp}
Consider formula $F = C_1 \wedge \dots \wedge C_5$ where
$C_1=\overline{x}_3 \vee x_4$, $C_2\!=\!y_1\!\vee\!x_3$, $C_3=y_1 \vee
\overline{x}_4$, $C_4\!=\!y_2\!\vee\!x_4$,
$C_5\!=\!y_2\!\vee\!\overline{x}_4$. Let $Y =\s{y_1,y_2}$ and $X =
\s{x_3,x_4}$. Consider the PQE problem of taking $C_1$ out of
\prob{X}{F} \ie finding $H(Y)$ such that $\prob{X}{F} \equiv H \wedge
\prob{X}{F \setminus \s{C_1}}$. In Subsection~\ref{ssec:pqe_exmp}, we
show that\linebreak$\prob{X}{F} \equiv y_1 \wedge \prob{X}{F \setminus
  \s{C_1}}$ \ie $H\!  =\!y_1$ is a solution to this PQE problem
$\blacksquare$
\end{example}
%
% Resolvable clauses
%
\begin{definition}
\label{def:res_of_cls}
Let clauses $C'$,$C''$ have opposite literals of exactly one variable
$w\!\in\!\V{C'}\!\cap\!\V{C''}$.  Then $C'$,$C''$ are called
\tb{resolvable} on~$w$.  Let $C$ be the clause consisting of the
literals of $C'$ and $C''$ minus those of $w$. Then $C$ is said to be
obtained by \tb{resolution} of $C'$ and $C''$ on $w$.
\end{definition}

%
% Definition: blocked clause
%
\begin{definition}
\label{def:blk_cls}
Let $C$ be a clause of a formula $F$ and $w \in \V{C}$. The clause $C$
is called \tb{blocked} in $F$ at $w$~\cite{blocked_clause} if no
clause of $F$ is resolvable with $C$ on $w$.
\end{definition}

%
% blocked clause is redundant
%
\begin{proposition}
\label{prop:blk_cls}
Let a clause $C$ be blocked in a formula $F(X,Y)$ at a variable $x \in
X$.  Then $C$ is redundant in \prob{X}{F}, \ie \prob{X}{F \setminus
  \s{C}} $\equiv$ \prob{X}{F}.
\end{proposition}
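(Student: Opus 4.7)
The plan is to prove the two implications $\exists X[F] \Rightarrow \exists X[F \setminus \{C\}]$ and $\exists X[F \setminus \{C\}] \Rightarrow \exists X[F]$ separately. The first is immediate from the fact that $F$ has strictly more clauses than $F \setminus \{C\}$, so any assignment satisfying $F$ satisfies $F \setminus \{C\}$, and hence any $\vec{y}$ satisfying $\exists X[F]$ satisfies $\exists X[F \setminus \{C\}]$.

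For the nontrivial direction, I would fix a full assignment $\vec{y}$ to $Y$ such that $\exists X[F \setminus \{C\}]$ holds under $\vec{y}$, pick a witness $\vec{x}$ for $X$ making $F \setminus \{C\}$ true, and exhibit a (possibly different) witness $\vec{x}'$ making all of $F$ true. If the pair $(\vec{x},\vec{y})$ already satisfies $C$, take $\vec{x}' = \vec{x}$ and we are done. Otherwise $C$ is falsified. Let $\ell$ be the literal of $x$ appearing in $C$; since $C$ is falsified, $\ell$ is false under $(\vec{x},\vec{y})$. Define $\vec{x}'$ by flipping the value of $x$ in $\vec{x}$, so that $\ell$ becomes true under $(\vec{x}',\vec{y})$; this satisfies $C$.

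The key step is then to verify that every other clause $D \in F \setminus \{C\}$, which was satisfied by $(\vec{x},\vec{y})$, remains satisfied by $(\vec{x}',\vec{y})$. I split into three cases on how $D$ mentions $x$. If $D$ contains neither $\ell$ nor $\bar{\ell}$, flipping $x$ has no effect on $D$. If $D$ contains $\ell$, then $D$ becomes true via $\ell$ after the flip. The only delicate case is when $D$ contains $\bar{\ell}$: here $D$ and $C$ carry opposite literals of $x$, so resolvability of $D$ with $C$ on $x$ is ruled out only by the existence of another variable $m \in \V{C} \cap \V{D}$ on which $C$ and $D$ also carry opposite literals. This is exactly where the blocked-clause hypothesis is used, and it is the main obstacle in the argument.

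To finish, I would observe that since $C$ is falsified by $(\vec{x},\vec{y})$, the $m$-literal of $C$ is false, so the opposite $m$-literal in $D$ is true under $(\vec{x},\vec{y})$. Because $m \neq x$, this literal is unaffected by the flip, so $D$ remains satisfied under $(\vec{x}',\vec{y})$. Hence $(\vec{x}',\vec{y})$ satisfies all of $F$, proving that $\vec{y}$ satisfies $\exists X[F]$. Combining both directions yields $\exists X[F] \equiv \exists X[F \setminus \{C\}]$ as required. \qed
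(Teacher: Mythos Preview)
Your proof is correct and is the standard argument for blocked-clause redundancy. Note, however, that the paper does not actually supply its own proof of this proposition: it simply cites~\cite{cav23} and reserves the appendix for the propositions it labels as new. So there is no in-paper proof to compare against; your argument is essentially the classical one and would be the expected content of the cited reference.
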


Proposition~\ref{prop:blk_cls} was proved in~\cite{cav23}. The proofs
of all \ti{new} propositions are given in Appendix~\ref{app:proofs}.

\section{PQE solving}
\label{sec:pqe_alg}
In this section, we briefly describe the PQE algorithm called
\dpqe~\cite{hvc-14} where '$\mi{DS}$' stands for 'D-sequent' (see below). Our
objective here is to provide an idea of how the PQE problem can be
solved. So, in Subsection~\ref{ssec:high_level}, we give a high-level
description of this algorithm and in Subsection~\ref{ssec:pqe_exmp} we
present an example of PQE solving.

\subsection{Some background}
Information on QE in propositional logic can be found
in~\cite{blocking_clause,cav09,cav11,cmu,nik2,cadet_qe}.  QE by
redundancy based reasoning is presented in~\cite{fmcad12,fmcad13}.
One of the merits of such reasoning is that it allows to introduce
\ti{partial} QE.  A description of PQE algorithms and their sources
can be found in~\cite{hvc-14,cav23,cert_tech_rep,ds_pqe,eg_pqe_plus}.
Although, as shown in~\cite{fmcad16,cav23}, the power of existing PQE
solvers is sufficient to demonstrate the viability of PQE, their
performance leaves much to be desired. One of the reasons for that is
discussed in Section~\ref{sec:pqe_flaws}.

%
% subsection
%
\subsection{High-level view}
\label{ssec:high_level}
Like all existing PQE algorithms, \dpqe uses \ti{redundancy based
  reasoning} justified by the proposition below.
%
% proposition
%
\begin{proposition}
\label{prop:sol_for_pqe}
  Formula $H(Y)$ is a solution to the PQE problem of taking 
  $G$ out of \prob{X}{F(X,Y)} (\ie $ \prob{X}{F} \equiv H \wedge
  \prob{X}{F \setminus G}$) iff
  \begin{enumerate}
  \item $F \imp H$ and
  \item $H \wedge
    \prob{X}{F} \equiv H \wedge \prob{X}{F \setminus G}$
  \end{enumerate}
\end{proposition}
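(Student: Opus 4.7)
The plan is to prove both directions of the biconditional, relying only on basic propositional manipulations and the fact that $H$ depends only on $Y$ (so $H$ commutes with $\exists X$).

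For the forward direction, I assume $\exists X[F] \equiv H \wedge \exists X[F \setminus G]$ and derive conditions (1) and (2). Condition (1) follows by a short chain: $F$ implies $\exists X[F]$ (existential generalization over $X$), and $\exists X[F]$ implies $H \wedge \exists X[F \setminus G]$ by the assumed equivalence, so in particular $F \Rightarrow H$. For condition (2), I simply conjoin both sides of the assumed equivalence with $H$ and absorb $H \wedge H$ into $H$, obtaining $H \wedge \exists X[F] \equiv H \wedge \exists X[F \setminus G]$.

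For the backward direction, assume (1) and (2). The key observation is that, since $H$ has no variables in $X$, the implication $F \Rightarrow H$ from (1) lifts to $\exists X[F] \Rightarrow H$: for any assignment $\vv{y}$ to $Y$ satisfying $\exists X[F]$, there is an $\vv{x}$ making $F(\vv{x},\vv{y})$ true, hence $H(\vv{y})$ true. Therefore $\exists X[F] \equiv H \wedge \exists X[F]$. Combining this with (2) yields $\exists X[F] \equiv H \wedge \exists X[F] \equiv H \wedge \exists X[F \setminus G]$, which is exactly the PQE equivalence required.

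The argument is essentially routine once the observation that $H$ does not mention $X$ is exploited. The only mild subtlety, and the one place I would be explicit, is the step that promotes $F \Rightarrow H$ to $\exists X[F] \Rightarrow H$; I expect this to be the main (very small) obstacle, and I would either state it as an immediate consequence of $\V{H} \cap X = \emptyset$ or verify it directly via Definition~\ref{def:formula-equiv} by considering an arbitrary full assignment to $Y$. No case analysis on $G$ is needed, and the proof works whether or not $G$ is already redundant in $\exists X[F]$.
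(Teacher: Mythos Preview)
Your proposal is correct and follows essentially the same approach as the paper: for the backward direction both arguments use that $H$ depends only on $Y$ to get $\exists X[F]\equiv H\wedge\exists X[F]$ and then apply condition~(2), and for the forward direction both conjoin with $H$ to obtain~(2). The only cosmetic difference is that for condition~(1) you argue directly via $F\Rightarrow\exists X[F]\Rightarrow H$, whereas the paper argues by contradiction using a witnessing assignment; these are equivalent and equally elementary.
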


So, to take $G$ out of \prob{X}{F(X,Y)}, it suffices to find $H(Y)$
implied by $F$ that makes $G$ \ti{redundant} in $H \wedge
\prob{X}{F}$. We refer to the clauses of $G$ as the \tb{target} ones.
Below, we provide some basic facts about \dpqe. Since taking out an
unquantified clause is trivial, we assume that $G$ contains only
\ti{quantified} clauses.  \dpqe finds a solution to the PQE problem
above by branching on variables of $F$.  The idea here is to reach a
subspace \pnt{q} where every clause of $G$ can be easily proved or
made redundant in \prob{X}{F}. \dpqe branches on unquantified
variables, \ie those of $Y$, \ti{before} quantified ones.  Like a
SAT-solver, \dpqe runs Boolean Constraint Propagation
(BCP)~\cite{dpll}. If a conflict occurs in subspace \pnt{q}, \dpqe
generates a conflict clause \Sub{C}{cnfl}~\cite{grasp} and adds it to
$F$ to \ti{make} clauses of $G$ redundant in subspace \pnt{q}. \dpqe
can also prove that $G$ is \ti{already} redundant in a subspace
\ti{without} reaching a conflict and adding a conflict clause.

To express the redundancy of a clause $C$ in \prob{X}{F} in a subspace
\pnt{q}, \dpqe uses a record \oods{q}{C} called a
\tb{D-sequent}~\cite{fmcad12,fmcad13}. This D-sequent also holds for
\prob{X}{F} after adding any clauses implied by $F$. A D-sequent
derived for a target clause $C$ is called \tb{atomic} if the
redundancy of $C$ can be trivially proved. D-sequents derived in
different branches can be resolved similarly to clauses. For
\ti{every} target clause $C$ of the original formula $G$, \dpqe uses
such resolution to eventually derive the D-sequent
\ods{\emptyset}{C}. The latter states the redundancy of $C$ in the
\ti{entire} space. Once the clauses of $G$ are proved redundant, \dpqe
terminates.  The solution $H(Y)$ to the PQE problem found by \dpqe
consists of the \ti{unquantified} clauses added to the initial formula
$F$ to make $G$ redundant.

%
% subsection
%
%\vspace{-6pt}
\subsection{An example of PQE solving}
\label{ssec:pqe_exmp}

Here we show how \dpqe solves
Example~\ref{exmp:pqe_exmp} introduced in Section~\ref{sec:basic}.
Recall that one takes $G = \s{C_1}$ out of \prob{X}{F(X,Y)} where $F =
C_1 \wedge \dots \wedge C_5$ and $C_1=\overline{x}_3 \vee x_4$,
$C_2\!=\!y_1\vee x_3$, $C_3=y_1 \vee \overline{x}_4$,
$C_4\!=\!y_2 \vee x_4$, $C_5\!=\!y_2 \vee\overline{x}_4$,
$Y=\s{y_1,y_2}$ and $X = \s{x_3,x_4}$.  One needs to find $H(Y)$ such
that $\prob{X}{F} \equiv H \wedge \prob{X}{F \setminus \s{C_1}}$.

Assume \dpqe picks the variable $y_1$ for branching and first explores
the branch $\pent{q}{'}=(y_1\!=\!0)$. In subspace \pent{q}{'}, clauses
$C_2,C_3$ become unit. (An unsatisfied clause is called \ti{unit} if
it has only one unassigned literal.)  After assigning $x_3\!=\!1$ to
satisfy $C_2$, the clause $C_1$ turns into unit too and a conflict
occurs: to satisfy $C_1$ and $C_3$, one has to assign the opposite
values to $x_4$. After a standard conflict analysis~\cite{grasp}, a
conflict clause $\Sub{C}{cnfl}=y_1$ is obtained by resolving $C_1$ and
$C_3$ on $x_4$ and resolving the resulting clause $y_1 \vee
\overline{x}_3$ with $C_2$ on $x_3$. To \ti{make} $C_1$ redundant in
subspace \pent{q}{'}, \dpqe adds \Sub{C}{cnfl} to $F$.  The redundancy
of $C_1$ is expressed by the D-sequent \ods{\pent{q}{'}}{C_1}.  This
D-sequent is an example of an \ti{atomic} one. It asserts that $C_1$
is redundant in subspace \pent{q}{'} because $C_1$ is implied by
another clause of $F$ (\ie \Sub{C}{cnfl}) in this subspace.

Having finished the first branch, \dpqe considers the second branch:
$\pent{q}{''}=(y_1=1)$. Since the clause $C_2$ is satisfied by
\pent{q}{''}, no clause of $F$ is resolvable with $C_1$ on variable
$x_3$ in subspace \pent{q}{''}. Hence, $C_1$ is blocked at variable
$x_3$ and, thus, redundant in \prob{X}{F} in subspace \pent{q}{''}.  So,
\dpqe generates the D-sequent \ods{\pent{q}{''}}{C_1}. This D-sequent
is another example of an \ti{atomic} D-sequent. It states that $C_1$
is \ti{already} redundant in \prob{X}{F} in subspace \pent{q}{''}
(without adding a new clause).
Then \dpqe resolves the D-sequents \ods{(y_1=0)}{C_1} and
\ods{(y_1=1)}{C_1} on $y_1$. This resolution produces the D-sequent
\ods{\emptyset}{C_1} stating the redundancy of $C_1$ in \prob{X}{F} in
the \ti{entire} space (\ie globally).  Recall that
$\Sub{F}{fin}=\Sub{C}{cnfl} \wedge \Sub{F}{init}$ where \Sub{F}{fin}
and \Sub{F}{init} denote the final and initial formula $F$
respectively. That is \Sub{C}{cnfl} is the only unquantified clause
added to \Sub{F}{init}. So, \dpqe returns \Sub{C}{cnfl} as a solution
$H(Y)$. The clause $\Sub{C}{cnfl} = y_1$ is indeed a solution since
$y_1$ is implied by \Sub{F}{init} and $C_1$ is redundant in $y_1
\wedge \prob{X}{\Sub{F}{init}}$. So both conditions of
Proposition~\ref{prop:sol_for_pqe} are met and, thus,
$\prob{X}{\Sub{F}{init}} \equiv y_1 \wedge \prob{X}{\Sub{F}{init}
  \setminus \s{C_1}}$.

%\vspace{-10pt}
\section{PQE And Interpolation}
\label{sec:interp}
In this section, we show that interpolation~\cite{craig,ken03} can be
viewed as a special case of PQE.

Let $A(X,Y) \wedge B(Y,Z)$ be an unsatisfiable formula where $X,Y,Z$
are disjoint sets of variables. Let $I(Y)$ be a formula such that $A
\wedge B \equiv I \wedge B$ and $A \imp I$. Replacing $A \wedge B$
with $I \wedge B$ is called \ti{interpolation} and $I$ is called an
\ti{interpolant}. PQE is similar to interpolation in the sense that
the latter is a \ti{structural} rather than semantic
operation. Suppose, for instance, that $A'(X,Y) \wedge B$ is a formula
such that $A' \wedge B \equiv A \wedge B$ but $A' \not\equiv A$. Then,
in general, the formula $I$ above \ti{is not} an interpolant for $A'
\wedge B$ \ie $A' \wedge B \not\equiv I \wedge B$ or $A' \not\imp I$.

Now, let us describe interpolation in terms of PQE. Consider the
formula \prob{W}{A \wedge B} where $W\!=\!X \cup Z$ and $A,B$ are the
formulas above. Let $A^*(Y)$ be obtained by taking $A$ out of
\prob{W}{A\!\wedge\!B} \ie $\prob{W}{A \wedge B} \equiv A^*\!\wedge
\prob{W}{B}$. Since $A \wedge B$ is unsatisfiable, $A^* \wedge B$ is
unsatisfiable too. So, \mbox{$A\wedge B \equiv A^* \wedge B$}. If $A
\imp A^*$, then $A^*$ is an interpolant $I$. If $A \not\imp A^*$, one
can extract an interpolant using the idea of~\cite{ken03} where $I(Y)$
is derived from a resolution proof that $A \wedge B$ implies an empty
clause. Similarly, in the case of PQE, one can extract an interpolant
from the resolution proof that $A \wedge B$ implies $A^*$. (It can be
provided by the PQE solver that produced $A^*$.)

The \ti{general case} of the PQE problem is different from the special
one above in three aspects. First, PQE is applied to the formula
\prob{W}{A(X,Y,Z) \wedge B(X,Y,Z)} where $A$ and $B$ depend on the
same set of variables. (In the special case above, $A$ \ti{does not}
depend on $Z$ and $B$ \ti{does not} depend on $X$.)  Second, $A \wedge
B$ can be \ti{satisfiable}.  Third, a solution $A^*$ is generally
implied by $A \wedge B$ rather than by $A$ \ti{alone}.  So,
interpolation can be viewed as a special case of PQE.

PQE allows one to build a compact ``interpolant'' $I(Y)$ even if the
formula $A(X,Y) \wedge B(Y,Z)$ above is \ti{satisfiable} (see
Appendix~\ref{app:interp}). Then the assignments satisfying $I \wedge
B$ are an \ti{abstraction} of those satisfying $A \wedge B$. Namely,
every (\pnt{y},\pnt{z}) satisfying $I \wedge B$ can be extended to
(\pnt{x},\pnt{y},\pnt{z}) satisfying $A \wedge B$ and vice versa. (A
trivial way to get $I(Y)$ is to perform full QE on \prob{X}{A}. But
then it is usually prohibitively large.)

%\vspace{-5pt}
\section{Structure-Aware Property Generation}
\label{sec:prop_gen}
In this section, we recall a method of property generation by
PQE~\cite{cav23}. We use this method as an example of structure-aware
computing. We explain property generation by the example of 
combinational circuits.
Let $M(X,V,W)$ be a combinational circuit where $X,V,W$ are sets of
internal, input, and output variables of $M$ respectively. Let
$F(X,V,W)$ denote a formula specifying $M$. As usual, this formula is
obtained by Tseitsin's transformations~\cite{tseitin}. Namely,
$F=F_{g_1} \wedge \dots \wedge F_{g_k}$ where $g_1,\dots,g_k$ are the
gates of $M$ and $F_{g_i}$ specifies the functionality of gate $g_i$.
%
% example
%
%\vspace{-3pt}
\begin{example}
\label{exmp:gate_cnf}
Let $g$ be a 2-input AND gate defined as $x_3 = x_1 \wedge x_2$ where
$x_3$ denotes the output variable and $x_1,x_2$ denote the input
variables. Then $g$ is specified by the formula
$F_g\!=\!(\overline{x}_1 \vee \overline{x}_2\vee x_3) \wedge (x_1 \vee
\overline{x}_3) \wedge (x_2 \vee \overline{x}_3)$. Every clause of
$F_g$ is falsified by an inconsistent assignment (where the output
value of $g$ is not implied by its input values). For instance,
$x_1\!\vee \overline{x}_3$ is falsified by the inconsistent assignment
$x_1\!=\!0, x_3\!=\!1$. So, every assignment \ti{satisfying} $F_g$
corresponds to a \ti{consistent} assignment to $g$ and vice
versa. Similarly, every assignment satisfying the formula $F$ above is
a consistent assignment to the gates of $M$ and vice versa
$\blacksquare$
\end{example}

For the sake of simplicity, consider generation of a property of the
circuit $M$ above by taking a single clause $C$ out of \prob{X}{F}.
Let $H(V,W) $ be a solution \ie $\prob{X}{F}\!\equiv\!H \wedge
\prob{X}{F \setminus \s{C}}$.  Since $F\!\imp\!H$, the solution $H$ is
a \ti{property}\footnote{In general, in addition to the internal
  variables $X$, input variables can be quantified too. In particular,
  if \ti{all} input variables are quantified, taking $C$ out of
  \Prob{X}{V}{F} produces a property $H(W)$ depending only on output
  variables. It states that if \pnt{w} falsifies $H$, then $M$ never
  produces the output \pnt{w}.} of $M$. If $H(V,W)$ is an
\ti{unwanted} property, $M$ has a bug. Every clause $Q(V,W)$ of $H$ is
implied by $F$ too and hence is a property of $M$ as well. (But $Q$ is
weaker than $H$.)

One can show that $Q$ specifies high-quality tests detecting the
``fault'' of $M$ simulated by removing $C$ from $F$.  Assume that a
full assignment $(\pnt{v},\pnt{w}^*)$ to $V \cup W$ falsifies
$Q$. Then \pnt{v} defines a test showing that the faulty version of
$M$ specified by $F \setminus \s{C}$ can produce a \ti{wrong} output
$\pnt{w}^*$ under the input \pnt{v} (see Appendix~\ref{app:stuck_at}
for details). This fault relates to a stuck-at fault~\cite{abram}. The
latter is an imaginary manufacturing bug where a signal line of $M$ is
stuck at 0 or 1.  Tests detecting stuck-at faults are known to be
structure-aware tests of very high quality. So, on the one hand,
$H(V,W)$ is a \tb{semantic} property holding for every circuit
logically equivalent to $M$. On the other hand, $H$ is produced by
\tb{structure-aware computing} (due to using PQE) and has a
\ti{special meaning} for $M$. Namely, $H$ specifies all tests
detecting the stuck-at fault of $M$ corresponding to the removal of a
single clause of $F$.

Below, to describe the status quo, we recall some experimental results
of~\cite{cav23}.  Those results were produced by an optimized version
of \dpqe. The latter was used to generate properties for the
combinational circuit $N_k$ obtained by unfolding a sequential circuit
$N$ for $k$ time frames. Each single property was generated by taking
out a clause from a formula specifying $N_k$ where all variables but
the state variables of the last time frame were quantified.  Those
properties of $N_k$ were employed to produce invariants of $N$. A
sample of HWMCC benchmarks containing from 100 to 8,000 latches was
used in those experiments. \dpqe managed to generate a lot of
properties of $N_k$ that turned out to be invariants of $N$. \dpqe
also successfully generated an \ti{unwanted} invariant of a FIFO
buffer and so identified a hard-to-find bug. However, the percentage
of PQE problems solved by \dpqe was relatively \ti{low}. (A single PQE
problem was to take a clause out of a quantified formula describing
$N_k$.)  So, the current methods of PQE solving need to be
\ti{significantly improved}.
%
%
%
%\newpage
%\vspace{-5pt}
%\input{old_files/e8q_check}

%% \subsection{Equivalence checking and structure-aware computing}
%% \label{ssec:ec_sac}
\section{Structure-Aware Equivalence Checking}
\label{sec:eq_check}
In this section, we discuss equivalence checking by PQE~\cite{fmcad16}
as one more example of structure-aware computing.  Let $M'(X',V',w')$
and $M''(X'',V'',w'')$ be the single-output combinational circuits to
check for equivalence.  Here $X^{\alpha},V^{\alpha}$ are the sets of
internal and input variables and $w^{\alpha}$ is the output variable
of $M^{\alpha}$ where $\alpha \in \s{'\,,\,''}$. Circuits $M',M''$ are
called \ti{equivalent} if they produce identical values of $w',w''$
for identical inputs $\pnt{v}',\pnt{v}''$ (\ie full assignments to
$V',V''$).

Let $F'(X'\!,V'\!,w')$ and $F''(X'',V'',w'')$ specify $M'$ and $M''$
respectively as described in Section~\ref{sec:prop_gen}. Let
$\bm{F^*}$ denote the formula $F'(X',V,w') \wedge F''(X'',V,w'')$
where $V'\!=\!V''\!=\!V$ since $M'$ and $M''$ must produce the same
outputs only for \ti{identical} inputs. If $M'$ and $M''$ are
\ti{structurally similar}, the most efficient equivalence checkers use
the method that we will call \tb{cut propagation
  (\tb{\ti{CP}})}~\cite{kuehlmann97,date01,berkeley}.  The idea of
\ti{CP} is to build a sequence of cuts $\mi{Cut}_1,\dots,\mi{Cut}_k$
of $M'$ and $M''$ and find cut points of $M',M''$ for which some
simple \ti{pre-defined} relations can be derived e.g functional
equivalence.  Let $\mi{Rel}_i$ denote the formula specifying the
relations found for $\mi{Cut}_i$. Computations move from inputs to
outputs where $\mi{Cut}_1\!=\!V$ and $\mi{Cut}_k\!=\!\s{w',w''}$. The
relation $\mi{Rel}_1$ is \ti{set} to the constant 1 whereas
$\mi{Rel}_2,\dots,\mi{Rel}_k$ are \ti{computed} in an inductive
manner.  That is $\mi{Rel}_i$ is obtained from $F^*$ and previously
derived $\mi{Rel}_1,\dots,\mi{Rel}_{i-1}$. The objective of \ti{CP} is
to prove $\mi{Rel}_k = (w' \equiv w'')$.  The main flaw of \ti{CP} is
that circuits $M'$ and $M''$ may not have cut points with pre-defined
relations even if $M'$ and $M''$ are very similar. In this case
\ti{CP} fails. So, it is \ti{incomplete} even for similar
circuits. Nevertheless, \ti{CP} is a successful practical
structure-aware method that exploits the similarity of $M'$ and $M''$.

\setlength{\intextsep}{4pt}
\begin{wrapfigure}{l}{1.9in}
%\begin{figure} 
 \begin{center}
    \includegraphics[width=1.6in]{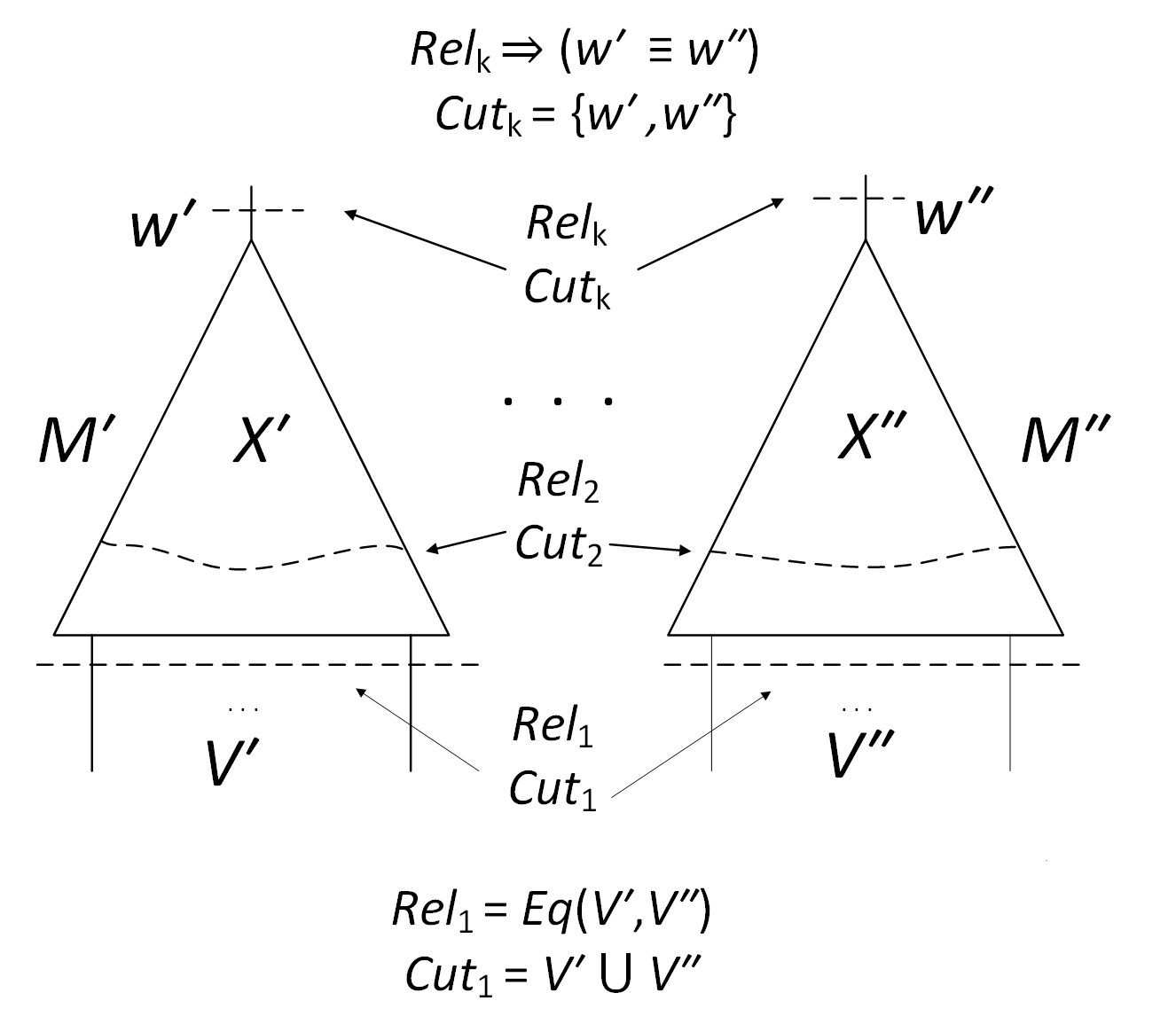}
  \end{center}
\vspace{-10pt}
\caption{Proving equivalence by\\ the \cp method}
\vspace{3pt}
\label{fig:cp}
%\end{figure}
\end{wrapfigure}

In~\cite{fmcad16}, a method of equivalence checking based on PQE was
introduced. This method also uses cut propagation. So, we will refer
to it as \cp. Let $\bm{F}$ denote the formula $F'(X',V',w') \wedge
F''(X'',V'',w'')$ where $V'$ and $V''$ are \ti{separate} sets. Let
$\bm{\mi{Eq}(V',V'')}$ denote a formula such that
\mbox{$\mi{Eq}(\pnt{v}\,',\pnt{v}\,'')$ = 1} iff $\pnt{v}\,' =
\pnt{v}\,''$.  Let $\bm{Z} = X' \cup X'' \cup V' \cup V''$.  \cp is based
on the proposition below proved in~\cite{fmcad16}.

\begin{proposition}
\label{prop:ec_by_pqe}
Assume $M',M''$ do not implement a constant (0 or 1).  Let
\prob{Z}{\mi{Eq} \wedge F} $\equiv H \wedge \prob{Z}{F}$.  Then $M'$
and $M''$ are equivalent iff $H \imp (w' \equiv w'')$.
\end{proposition}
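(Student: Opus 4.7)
\medskip

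\noindent\textbf{Proof plan for Proposition~\ref{prop:ec_by_pqe}.}

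The plan is to first reduce the PQE equation to a concrete semantic characterization of $H(w',w'')$, and then prove both directions of the iff from that characterization.

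First, I would unpack the right-hand side of the PQE equation on the non-quantified variables $\{w',w''\}$. Since $F = F'(X',V',w') \wedge F''(X'',V'',w'')$ and the primed and double-primed variable sets are disjoint, for any full assignment $(w',w'')$ we have $\prob{Z}{F}(w',w'') = \prob{X' \cup V'}{F'}(w') \wedge \prob{X'' \cup V''}{F''}(w'')$. Each factor is just the characteristic function of the output values reachable by the corresponding circuit. The hypothesis that neither $M'$ nor $M''$ implements a constant then gives $\prob{Z}{F} \equiv 1$. Consequently, from $\prob{Z}{\mi{Eq} \wedge F} \equiv H \wedge \prob{Z}{F}$ I obtain the clean identity $H(w',w'') \equiv \prob{Z}{\mi{Eq} \wedge F}(w',w'')$.

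Next I would translate $\prob{Z}{\mi{Eq} \wedge F}(w',w'')$ into circuit terms. An assignment to $Z$ satisfies $\mi{Eq} \wedge F$ iff $\pnt{v}' = \pnt{v}''$ and the internal assignments $\pnt{x}',\pnt{x}''$ are consistent with $F',F''$. Because $F^{\alpha}$ is exactly satisfied by the consistent assignments to $M^{\alpha}$ (as recalled in Example~\ref{exmp:gate_cnf}), this yields the semantic characterization: $H(w',w'') = 1$ iff there exists a common input $\pnt{v}$ such that $M'(\pnt{v}) = w'$ and $M''(\pnt{v}) = w''$.

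With that in hand the iff is short. For the ``only if'' direction, assume $M' \equiv M''$ and take any $(w',w'')$ with $H(w',w'') = 1$. Then some $\pnt{v}$ yields $M'(\pnt{v}) = w'$ and $M''(\pnt{v}) = w''$; equivalence forces $w' = w''$, proving $H \imp (w' \equiv w'')$. For the ``if'' direction, I argue the contrapositive: if $M' \not\equiv M''$, pick $\pnt{v}$ with $M'(\pnt{v}) \neq M''(\pnt{v})$ and set $w' := M'(\pnt{v})$, $w'' := M''(\pnt{v})$. The same $\pnt{v}$ witnesses $H(w',w'') = 1$ while $w' \neq w''$, contradicting $H \imp (w' \equiv w'')$.

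The only nontrivial step is the reduction $H(w',w'') \equiv \prob{Z}{\mi{Eq} \wedge F}(w',w'')$, which relies crucially on the non-constant hypothesis; without it, $\prob{Z}{F}$ could kill some rows of the equivalence and allow $H$ to be strengthened arbitrarily on those rows, breaking the ``if'' direction. The remaining steps are routine unfolding of Tseitin semantics and quantifier meaning.
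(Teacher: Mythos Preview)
The paper does not actually contain a proof of this proposition; it states that Proposition~\ref{prop:ec_by_pqe} is ``proved in~\cite{fmcad16}'' and then skips it in Appendix~\ref{app:proofs} (the \verb|\stepcounter{proposition}| between the proofs of Propositions~\ref{prop:sol_for_pqe} and~\ref{prop:rd_by_pqe} silently passes over it). So there is no in-paper argument to compare against.

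That said, your plan is sound and is the natural proof. The key reduction $\prob{Z}{F}\equiv 1$ follows exactly as you say: $F'$ and $F''$ share no variables, so $\prob{Z}{F}$ factors, and each factor $\prob{X^{\alpha}\cup V^{\alpha}}{F^{\alpha}}(w^{\alpha})$ is the characteristic function of the realizable output values of $M^{\alpha}$, which is identically $1$ precisely because $M^{\alpha}$ is not constant. With $\prob{Z}{F}\equiv 1$ the PQE equation forces $H\equiv\prob{Z}{\mi{Eq}\wedge F}$ (no slack is left for $H$ on any row), and your semantic reading of $\prob{Z}{\mi{Eq}\wedge F}$ as ``$(w',w'')$ is a simultaneously realizable output pair under a common input'' is exactly the Tseitin semantics recalled in Example~\ref{exmp:gate_cnf}. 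Both directions of the iff then fall out by the short arguments you give. Your remark that the non-constant hypothesis is needed only for the ``if'' direction is also on point: without it, $H$ could be tightened arbitrarily on the dead rows of $\prob{Z}{F}$ and fail to witness a disagreement.
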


Hence, to find if $M',M''$ are equivalent, it suffices to take \ti{Eq}
out of \prob{Z}{\mi{Eq} \wedge F}. Remarkably, PQE produces the same
result $H(w',w'')$ as \ti{full QE} on \prob{X}{\mi{Eq} \wedge F}. The
only exception, occurring when $M'$ and $M''$ are a constant, can be
ruled out by a few simple SAT checks. Note that $\mi{Eq} \wedge F$ is
semantically the same as $F^*$ above used in \ti{CP}. However, \cp
\ti{cannot} be formulated in terms of $F^*$ since it exploits the
\ti{structure} of $\mi{Eq} \wedge F$, namely, the presence of
$\mi{Eq}$.

\cp takes $\mi{Eq}$ out of \prob{Z}{\mi{Eq} \wedge F} incrementally,
cut by cut, by computing relations \rl{i} (see Fig.~\ref{fig:cp}). The
first cut $\mi{Cut}_1$ equals $V' \cup V''$ and \rl{1} is set to
$\mi{Eq}$. The remaining relations \rl{i} are computed by PQE (see
Appendix~\ref{app:ec_by_pqe}). Like \ti{CP}, the \cp method produces a
\tb{semantic} result (about equivalence of $M'$ and $M''$) by
\tb{structure-aware computing}. However, using PQE makes \cp much more
powerful. The difference between \cp and \ti{CP} is threefold. First,
\cp does not look for any \ti{no pre-defined relationships}.  (One can
show that \ti{CP} is just a special case of \cp where only particular
relationships between cut points are considered.)  Relations \rl{i}
computed by \cp just need to satisfy a very simple property:
$\prob{Z}{\rl{i-1} \wedge \rl{i} \wedge F} \equiv \prob{Z}{\rl{i}
  \wedge F}$. That is the next relation \rl{i} makes the previous
relation \rl{i-1} \ti{redundant}.  Second, \cp is
\ti{complete}. Third, as formally proved in~\cite{fmcad16}, relations
\rl{i} computed by \cp become quite \ti{simple} if $M'$ and $M''$ are
structurally similar.

Below, to describe the status quo, we recall some experiments with \cp
published in \cite{fmcad16}. In those experiments, circuits $M',M''$
containing a multiplier of various sizes were checked for
equivalence. (An optimized version of \dpqe was used as a PQE solver.)
$M',M''$ were intentionally designed so that they were structurally
similar but did not have any functionally equivalent cut points.  A
high-quality tool called ABC~\cite{abc} showed very poor performance,
whereas \cp solved all examples efficiently. At the same time, due to
insufficient power of \dpqe, the individual PQE problems of \cp had to
be solved \ti{approximately}\footnote{To boost PQE solving, a
  heuristic was used to drop some quantified clauses of the PQE
  problem at hand. That heuristic did not affect the correctness of
  the proof that $M'$ and $M''$ were equivalent but made the PQE
  solver return an under-approximation of a correct solution.}.  So,
building more powerful PQE solvers is of great importance.

%\input{s1at_by_pqe}
%\vspace{-5pt}
\section{Structure-Aware Model Checking}
%\vspace{-5pt}
\label{sec:pqe_mc}
In this section, we consider the problem of finding the reachability
diameter, \ie an instance of model checking. We describe a method of
solving this problem that serves as one more example of
structure-aware computing by PQE.
%
% subsection
%
%\vspace{-10pt}
\subsection{Motivation and some background}
An efficient algorithm for finding the reachability diameter can be
quite beneficial.  Suppose one knows that the reachability diameter of
a sequential circuit $N$ is less than $k$. Then, to verify \ti{any}
invariant of $N$, it suffices to check if it holds for the states of
$N$ reachable in at most $k\!-\!1$ transitions. This check can be done
by bounded model checking~\cite{bmc}. Finding the reachability
diameter of a sequential circuit by existing methods essentially
requires computing the set of all reachable
states~\cite{mc_book,mc_thesis}, which does not scale well.  An upper
bound on the reachability diameter called the recurrence diameter can
be found by a SAT-solver~\cite{rec_diam}. However, this upper bound is
very imprecise and its computing does not scale well either.
%
% Subsection
%
%\vspace{-10pt}
\subsection{Some definitions}
\label{ssec:defs}
Let $S$ specify the set of state variables of a sequential circuit
$N$.  Let $T(S',S'')$ denote the transition relation of $N$ where
$S',S''$ are the sets of present and next state variables.  Let
formula $I(S)$ specify the initial states of $N$. (A \tb{state} is a
full assignment to $S$.) A state \ppnt{s}{k+1} of $N$ with initial
states $I$ is called \tb{reachable} in $k$ transitions if there is a
sequence of states $\ppnt{s}{1},\dots,\ppnt{s}{k+1}$ such that
$I(\ppnt{s}{1})=1$ and $T(\ppnt{s}{i},\ppnt{s}{i+1})=1$,
$i=1,\dots,k$. For the reason explained in Remark~\ref{rem:stutter},
we assume that $N$ can \tb{stutter}.  That is, $T(\pnt{s},\pnt{s})=1$
for every state \pnt{s}.  (If $N$ lacks stuttering, it can be easily
introduced.)

%\vspace{-5pt}
\begin{remark}
  \label{rem:stutter}
If $N$ can stutter, the set of states of $N$ reachable in $k$
transitions is the same as the set of states reachable in \ti{at most}
$k$ transitions. Indeed, due to the ability of $N$ to stutter, each
state reachable in $p$ transitions is also reachable in $q$
transitions where $q > p$ $\blacksquare$
\end{remark}

%\vspace{-5pt}
Let $R_k(S)$ be a formula specifying the set of states of $N$
reachable in $k$ transitions. That is $R_k(\pnt{s})=1$ iff \pnt{s} is
reachable in $k$ transitions.  Let $S_i$ specify the state variables
of $i$-th time frame. Formula $R_k(S)$ where $S = S_{k+1}$ can be
computed by performing QE on \prob{\Abs{k}}{I_1 \wedge T_{1,k}}. Here
$\bm{\Abs{k}} = S_1 \cup \dots \cup S_{k}$ and $\bm{T_{1,k}} =
T(S_1,S_2) \wedge \dots \wedge T(S_k,S_{k+1})$. We will call \di{N,I}
the \tb{reachability diameter} of $N$ with initial states $I$ if any
reachable state of $N$ requires at most \di{N,I} transitions to reach
it.
%
%  subsection
%
%\vspace{-35pt}
%\subsection{Computing reachability diameter and  structure-aware computing}
\subsection{Computing the reachability diameter}
In this subsection, we consider the problem of deciding if \di{N,I} $<
k$.  A naive way to solve this problem is to compute $R_{k-1}$ and
$R_k$ by performing QE as described above.  $\di{N,I} < k$ iff
$R_{k-1} \equiv R_k$.  Unfortunately, computing $R_{k-1}$ and $R_k$
even for a relatively small value of $k$ can be very hard or simply
infeasible for large circuits.
Below, we show that, arguably, one can solve this problem more
efficiently by a \ti{structure-aware} procedure based on PQE.

\begin{proposition}
  \label{prop:rd_by_pqe}
Let $k \ge 1$. Let \prob{\Abs{k}}{I_1 \wedge I_2 \wedge T_{1,k}} be a
formula where $I_1$ and $I_2$ specify the initial states of $N$ in
terms of variables of $S_1$ and $S_2$ respectively.  Then $\di{N,I} <
k$ iff $I_2$ is redundant in \prob{\Abs{k}}{I_1 \wedge I_2 \wedge
  T_{1,k}}.
\end{proposition}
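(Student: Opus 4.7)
The plan is to reduce the redundancy of $I_2$ to an equivalence between two reachability sets. Let $A(S_{k+1}) := \prob{\Abs{k}}{I_1 \wedge T_{1,k}}$ and $B(S_{k+1}) := \prob{\Abs{k}}{I_1 \wedge I_2 \wedge T_{1,k}}$. By Definition~\ref{def:red_cls}, the redundancy of $I_2$ in $\prob{\Abs{k}}{I_1 \wedge I_2 \wedge T_{1,k}}$ is exactly the condition $A \equiv B$. Using the construction of $R_k$ recalled in Subsection~\ref{ssec:defs}, $A$ is the characteristic formula of $R_k$, so the proposition reduces to showing $A \equiv B$ iff $\di{N,I} < k$.

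The first key step is to identify $B$ with the characteristic formula of $R_{k-1}$. Unfolding the quantifier, $B(\pnt{s}) = 1$ iff some sequence $\ppnt{s}{1},\ldots,\ppnt{s}{k+1} = \pnt{s}$ has both $\ppnt{s}{1}$ and $\ppnt{s}{2}$ initial and satisfies $T$ on consecutive pairs. Dropping $\ppnt{s}{1}$ exhibits a $(k-1)$-transition path from the initial state $\ppnt{s}{2}$ to $\pnt{s}$, so $\pnt{s} \in R_{k-1}$. Conversely, given a path $\ppnt{t}{1},\ldots,\ppnt{t}{k} = \pnt{s}$ witnessing $\pnt{s} \in R_{k-1}$, the stuttered extension $\ppnt{t}{1},\ppnt{t}{1},\ppnt{t}{2},\ldots,\ppnt{t}{k}$ has initial states at both positions $1$ and $2$ and witnesses $\pnt{s} \in B$. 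This is the only place where the assumption $T(\pnt{s},\pnt{s}) = 1$ from Remark~\ref{rem:stutter} is used.

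The second step is to show $R_k \equiv R_{k-1}$ iff $\di{N,I} < k$. Because of stuttering, the chain $R_0 \subseteq R_1 \subseteq \ldots$ is monotone, so $R_{k-1} = R_k$ implies, by applying one more $T$-step and a routine induction, that $R_j = R_{k-1}$ for every $j \ge k-1$, hence $\di{N,I} \le k-1$. The converse is immediate: if $\di{N,I} < k$, then every state of $R_k$ is reachable in at most $k-1$ transitions, hence already in $R_{k-1}$; combined with $R_{k-1} \subseteq R_k$ this yields the equivalence. Composing the two steps completes the argument.

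The only delicate point is the $R_{k-1} \subseteq B$ inclusion above: without stuttering one cannot manufacture the extra ``initial'' state at position $2$ from a shorter reachability witness, and the proposition would break down. Everything else is bookkeeping once $B$ has been recognized as $R_{k-1}$.
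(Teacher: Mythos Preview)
Your argument is correct and follows essentially the same route as the paper: both proofs identify $\prob{\Abs{k}}{I_1 \wedge T_{1,k}}$ with $R_k$ and $\prob{\Abs{k}}{I_1 \wedge I_2 \wedge T_{1,k}}$ with $R_{k-1}$, then reduce the question to whether $R_{k-1} \equiv R_k$. You are more explicit than the paper in two places---the stuttering argument for $R_{k-1} \subseteq B$ (the paper just asserts that $I_1$ and $T(S_1,S_2)$ are redundant once $I_2$ is present) and the stabilization induction showing $R_{k-1}=R_k$ forces $\di{N,I}<k$---but these are elaborations of the same idea rather than a different approach.
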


%\vspace{-4pt}
Proposition~\ref{prop:rd_by_pqe} reduces checking if $\di{N,I} < k$ to
the \ti{decision version} of PQE (\ie finding if $I_2$ is redundant in
\prob{\Abs{k}}{I_1 \wedge I_2 \wedge T_{1,k}}). Note that the presence
of $I_2$ simply ``cuts off'' the initial time frame (indexed by
1). So, \ti{semantically}, \prob{\Abs{k}}{I_1 \wedge I_2 \wedge
  T_{1,k}} is the same as \prob{\aabs{2}{k}}{I_2 \wedge T_{2,k}} \ie
specifies the states reachable in $k\!-\!1$ transitions. But the
former has a different \ti{structure} that can be exploited by
PQE. Namely, proving $I_2$ redundant in \prob{\Abs{k}}{I_1 \wedge I_2
  \wedge T_{1,k}} means that the sets of states reachable in $k\!-\!1$
and $k$ transitions (the latter specified by \prob{\Abs{k}}{I_1 \wedge
  T_{1,k}}) are identical.

Importantly, $I_2$ is a \ti{small} piece of the formula above. So,
proving it redundant can be much more efficient than computing
$R_{k-1}$ and $R_k$. For instance, computing $R_k$ by QE is equivalent
to proving the \ti{entire formula} $I_1 \wedge T_{1,k}$ redundant in
$R_k \wedge \prob{\Abs{k}}{I_1 \wedge T_{1,k}}$. However, making the
full use of such greater efficiency requires \ti{better PQE solvers}.

The reachability diameter is a \tb{semantic} notion in the sense that
it is identical for all logically equivalent circuits implementing the
same state transition graph. So, PQE allows one to obtain a semantic
result by \tb{structure-aware computing}.

%\section{Some Drawbacks Of Current PQE Algorithms}
\section{Making PQE Solving More Structure-Aware}
\label{sec:pqe_flaws}
%\subsection{Future exposition}
So far, we have considered using PQE for creating new methods of
structure-aware computing. However, PQE solving \ti{itself} can
benefit from being structure-aware. In this section, we describe a
problem with the current methods of PQE solving by the example of
\dpqe presented in Section~\ref{sec:pqe_alg}.  We argue that this
problem is caused by adding quantified conflict clauses (\ie the ones
depending on quantified variables) to the formula. Such clauses change
the structure of the formula by making its parts much more
``connected''. This problem can be addressed by using a technique
where one \ti{does not have to} add learned clauses to the formula. In
this paper, we introduce such a technique in the context of
SAT-solving
(Sections~\ref{sec:sac_sat_gen}-\ref{sec:sac_sat_alg}). Our future
goal is to incorporate this technique into PQE algorithms to make them
more structure-aware. (Appendix~\ref{app:sac_pqe} gives an idea of how
this can be done.)

%
%
%
%\subsection{Adding new quantified clauses}
%\subsection{Drawbacks caused by adding new quantified clauses}
%\vspace{3pt}
%\noindent
%\tb{Global and local targets}.
\subsection{Global and local targets}
\label{ssec:targets}
Before explaining the problem with adding conflict clauses, one needs
to describe the machinery of local targets used by \dpqe. Consider,
the PQE problem of taking a set of clauses $G$ out of
\prob{X}{F(X,Y)}. \dpqe does this by branching on variables of $F$
until the target clauses are proved/made redundant in the current
subspace \pnt{q}.  The clauses of $G$ are the \ti{global} targets
whose redundancy must be proved in the \ti{entire} space. But \dpqe
may also need to prove redundancy of some clauses of $F \setminus G$
\ti{locally} \ie in \ti{some} subspaces. For instance, when a global
target clause $C$ becomes unit in subspace \pnt{q}, \dpqe tries to
prove redundancy of the clauses resolvable with $C$ on the unassigned
variable. Denote the latter as $x$. (As mentioned earlier, \dpqe
assigns unquantified variables before quantified. So, $x$ is
quantified.)  If all these clauses are proved redundant in subspace
\pnt{q} \ie locally, $C$ is blocked in subspace \pnt{q} at $x$ and
hence redundant there. When a \ti{local} target clause $C'$ becomes
unit, its redundancy is proved in the same manner \ie via producing
\ti{new} local target clauses resolvable with $C'$. So, \ti{any}
quantified clause of $F \setminus G$ can become a local target.

%\noindent
%\tb{Strong and weak redundancy.}
\subsection{Strong and weak redundancy}
As we mentioned earlier, if a conflict occurs in subspace \pnt{q}, the
target clauses are \ti{made} redundant by adding a conflict clause
\Sub{C}{cnfl} that is falsified by \pnt{q}. (Otherwise, \dpqe simply
shows that the target clauses are already redundant in subspace
\pnt{q} without adding any clauses.) The addition of conflict clauses
is inherited by \dpqe from SAT solvers with clause
learning~\cite{grasp}. To explain the effect of adding conflict
clauses, let us consider the two types of clause redundancy below. A
clause $C$ is \ti{strongly} redundant in subspace \pnt{q} if
$\cof{F}{q} \setminus \s{\cof{C}{q}} \imp \cof{C}{q}$. Strong
redundancy means that $C$ is redundant in subspace \pnt{q} \ti{both}
in $F$ and \prob{X}{F}. A clause $C$ is \ti{weakly} redundant in
subspace \pnt{q} if it is redundant \ti{only} in \prob{X}{F} in this
subspace but not in $F$ (because $\cof{F}{q} \setminus \s{\cof{C}{q}}
\not\imp \cof{C}{q}$).

\dpqe identifies and exploits both types of clause redundancy above.
\tb{The problem} is that adding conflict clauses \ti{amplifies} the
ability to detect strong redundancies but \ti{diminishes} it for weak
ones. Indeed, on the one hand, adding conflict clauses increases the
number of implied assignments derived during BCP. If a target clause
$C$ becomes satisfied by the assignment derived from a unit clause
$C'$, it is strongly redundant in the current subspace (because $C$ is
implied by $C'$ in this subspace). So, the presence of conflict
clauses is beneficial. But this is \ti{not the case} for \ti{weak}
redundancies.

\dpqe checks the weak redundancy of a target $C$ when it becomes unit
in subspace \pnt{q}. Let $x$ be the unassigned variable of $C$.  As
mentioned above, to prove $C$ redundant in subspace \pnt{q}, \dpqe
treats the clauses resolvable with $C$ on $x$ as new local targets for
being proved redundant. If a quantified conflict clause is resolvable
with $C$ on $x$, it becomes a local target too. So, adding conflict
clauses can drastically increase the number of the local targets of
\dpqe and make proving weak redundancies \ti{much harder}. (This is
somewhat similar to the negative effect on so-called blocked sets of
clauses caused by adding new clauses obtained by
resolution~\cite{woody_allen}.)

%\newpage

%\section{SAT Solving By Structure-Aware Computing}
%\vspace{-5pt}
%\section{Structure-Aware SAT And PQE Solving}
\section{Structure-Aware SAT solving}
%\vspace{-5pt}
\label{sec:sac_sat_gen}
In this and the next two sections, we describe a technique called
\sasat facilitating structure-aware SAT-solving.  Our motivation is
twofold.  First, this technique can be used for fixing the problem
with PQE solving mentioned in the previous section
(Appendix~\ref{app:sac_pqe} has more details).  Second, \sasat is
important by itself since it, arguably, facilitates designing
structure-aware SAT solvers tuned to particular applications.

%\vspace{-5pt}
% Plan
% the reason for success of CDCL solvers
% a flaw of CDCL solvers by the example of equivalence checking
% a new direction to explore
%
% subsection
%
\subsection{Some background}
\label{ssec:backgr}
Modern CDCL SAT solvers are based on
resolution~\cite{grasp,resolution}. (\tb{\ti{CDCL}} stands for
\ti{conflict driven clause learning}.) A resolution proof that a
formula $F$ is unsatisfiable can be represented by a directed acyclic
graph $G$. Every node of $G$ corresponds to a clause.  The sources of
$G$ correspond to clauses of $F$ and its sink corresponds to an empty
clause. Every internal node of $G$ corresponds to a clause obtained by
resolving the clauses of the two incoming nodes.

Let a set of nodes of $G$ be a cut and \Sub{F}{cut} be the set of
clauses corresponding to the nodes of this cut. Note that the nodes
between this cut and the sink form a resolution proof that
\Sub{F}{cut} is unsatisfiable~\cite{novikov}. So one can build $G$ by
constructing a sequence of unsatisfiable formulas \Sub{F}{cut}
starting with the cut defined by the sources of $G$ and ending with
the cut specified by the sink. Importantly, producing \Sub{F}{cut}
corresponding to a cut makes every clause $C$ assigned to a node
\ti{below} this cut \ti{redundant}. (Because $\Sub{F}{cut} \imp
C$.) This nice property is implicitly used by the modern CDCL solvers
via decision-making introduced by Chaff~\cite{chaff} that is biased
towards recent conflict clauses. Such a bias is aimed at producing new
conflict clauses that make redundant (some) original clauses and conflict
clauses derived earlier.

So, the success of Chaff-like CDCL solvers
(e.g.~\cite{berkmin,minisat,picosat} and many others) can be
attributed to exploiting the basic structure of a \ti{resolution
  proof}. Unfortunately, this is not enough to exploit the fine
structure of the \ti{original formula}.  Assume, for instance, that
$F$ specifies equivalence checking of combinational circuits $M'$ and
$M''$.  As mentioned in Section~\ref{sec:eq_check}, if $M'$ and $M''$
are structurally similar, a short resolution proof of their
equivalence is obtained by deriving clauses relating their internal
points. However, CDCL solvers are not structure-aware enough to
produce it. For instance, if $M'$ and $M''$ are \ti{identical}
multipliers, proving $F$ unsatisfiable by a CDCL solver may take hours
despite the existence of a short resolution proof (that is linear in
$|F|$).

One can address the problem above by designing SAT solvers that can
exploit the structure of the original formula. The existing
structure-aware SAT-algorithms are typically CDCL-solvers with
heuristics tuned to a particular application (see
e.g.~\cite{circ_sat1,circ_sat2}).  SAT algorithms based on \sasat
should be quite different from CDCL solvers.  And as we argue below,
\sasat has powerful features that can make such algorithms much more
structure-aware.

%
% subsection
%
% Plan
% *) the general picture
% *) checking literal redundancy
% *) benefits of sasat
%    +) structure-awareness (decision making)
%    +) structure-awareness (clause separation and induction)
%    +) proof by induction and resolution
%
%\vspace{-5pt}
\subsection{A high-level view of \sasat}
\label{ssec:sasat_hilv}
Let $F(X)$ be the formula to check for satisfiability.  \sasat builds
a proof by performing local checks of literal redundancy.  Every such
a check finds out if a literal of a clause of $F$ is redundant in a
subspace \pnt{q}. Exploring literal redundancies has three
outcomes. First, a clause \Sub{B}{res} falsified by \pnt{q} is derived
by resolution. This case is somewhat similar to SAT solving by
CDCL. Second, after proving redundancy of particular literals for a
(small) subset of clauses, $F$ is proved unsatisfiable in subspace
\pnt{q} \ti{before} any clause \Sub{B}{res} can be derived.  This
proof is based on \ti{inductive reasoning} described in the next
section. In this case, \sasat builds an ``induction clause''
\Sub{B}{ind} falsified by \pnt{q} that confirms the unsatisfiability
of $F$ in subspace \pnt{q}. Third, if a literal of a clause of $F$ is
not redundant in subspace \pnt{q}, a satisfying assignment is
generated.

Here are some features of \sasat facilitating efficient
structure-aware computing.  \ti{First}, \sasat maintains \tb{two
  separate sets} of clauses: the set of clauses $F$ to check for
satisfiability and a set $P$ of learned clauses certifying literal
redundancies. When a new clause is learned, a SAT solver based on
\sasat is free to add it either to $F$ or to $P$. Only clauses of $F$
are checked for literal redundancy whereas both $F$ and $P$ are used
in BCP to learn new clauses. (In particular, one can add \ti{all}
learned clauses to $P$ keeping the original formula $F$ \ti{intact}.)
The \ti{intuition} here is that, in practice, short proofs usually
follow the structure of the original formula. By limiting the set of
clauses added to $F$, one can mostly preserve the original structure
of $F$, which facilitates finding a short proof.

\ti{Second}, the fact that decision making of \sasat is done on
\tb{clauses} (when checking literal redundancy) rather than variables,
makes it easier to exploit the structure of $F$. Suppose, for
instance, that $F$ specifies the equivalence checking of circuits
$M',M''$. A variable of $F$ is shared by \ti{many} gates but an
original clause of $F$ relates to a \ti{particular} gate of $M'$ or
$M''$ (Section~\ref{sec:prop_gen}). So, by picking clauses of $F$ in a
particular order, one can follow the fine structure of $M'$ and $M''$.

\ti{Third}, there is a reason to believe that resolution + induction
is \tb{more powerful} than pure resolution used by CDCL
solvers. Namely, the inductive reasoning of \sasat can be simulated by
adding blocked clauses (see Example~\ref{exmp:prop}) and resolution +
blocked clauses is exponentially more powerful than pure
resolution~\cite{gen_ext_resol}. Usually, converting a powerful
\ti{non-deterministic} proof system into an efficient
\ti{deterministic} algorithm is nearly impossible because the proof
space of such a system is huge. (For instance, a deterministic
algorithm has to look for ``good'' blocked clauses to add.)
Fortunately, for \sasat, tapping into the power of blocked clauses is
\ti{cheap} because inductive reasoning makes their \ti{explicit}
generation unnecessary.

%\vspace{-20pt}
%\section{Structure-Aware SAT Solving}
\section{Propositions Supporting \sasat}
\label{sec:sac_props}
In this section, we show how to prove a formula unsatisfiable by
combining literal redundancy checks with induction.

%\vspace{-5pt}
%
%   def: 
%
\begin{definition}
  \label{def:prim_sec}
Let $C$ be a clause of a formula $F$. We will denote by \bm{F_c} the
subset of clauses of $F$ containing $C$ and every clause sharing at
least one literal with $C$. We will refer to $F_c$ as the
\bm{C}-\tb{cluster} of clauses of $F$. We will call $C$ the
\tb{primary} clause and the clauses of $F_c \setminus \s{C}$ the
\tb{secondary} clauses of $F_c$.
\end{definition}
%
% example: C-cluster
%
\begin{example}
Let $F=C_1 \wedge C_2 \wedge C_3 \wedge C_4 \wedge \dots$ where $C_1 =
x_1 \vee x_2$, $C_2 = x_1 \vee \overline{x}_7 \vee x_9$, $C_3 = x_1
\vee \overline{x}_3$, $C_4 = x_2 \vee x_5 \vee x_6$. Assume
$C_1,\dots,C_4$ are the only clauses of $F$ with literals $x_1$ and/or
$x_2$. (But no restriction is imposed on clauses of $F$ with
$\overline{x}_1$ and/or $\overline{x}_2$.)  The clause $C_1$ specifies
the $C_1$-cluster $F_{c_1}=\s{C_1,C_2,C_3,C_4}$. Here $C_1$ is the
primary clause and $C_2,C_3,C_4$ are the secondary clauses of
$F_{c_1}$ $\blacksquare$
\end{example}

%
% def: vicinity
%
%\vspace{-7pt}
\begin{definition}
\label{def:vicin}
Let $C$ be a clause of $F(X)$ and $l$ be a literal of $C$.  The set of
full assignments to $X$ falsifying all literals of $C$ but $l$ is
called the \bm{l}\tb{-vicinity} of $C$. The shortest assignment
\pnt{q} satisfying $l$ and falsifying the other literals of $C$ is
said to \tb{specify} the \vic{l} of $C$. A clause $B$ is called an
\bm{l}\tb{-certificate} for $C$ if $F \imp B$ and \pnt{q} falsifies
$B$. 
\end{definition}

The existence of the certificate $B$ above means that the $l$-vicinity
of $C$ has no assignment satisfying $F$ and so, $l$ is \tb{redundant}
in $C$ (\ie can be dropped).
%
% example: literal vicinity
%
%\vspace{-3pt}
\begin{example}
  \label{exmp:vic}
Let $C\!=\!x_1 \vee x_2 \vee x_3$ be a clause of $F$. Then, say, the
$x_1$-vicinity of $C$ is specified by the assignment
\linebreak\pnt{q}\!=($\bm{x_1\!=\!1},x_2\!=\!0, x_3\!=\!0$). (The
assignment to $x_1$ satisfying $C$ is shown in bold.) The clause
$B\!=\!\overline{x}_1\!\vee x_2\!\vee\!x_3$ is an \lpr{x_1} for $C$ if
$F\!\imp\!B$. Then no assignment of subspace \pnt{q} satisfies $F$ and
the literal $x_1$ is redundant. Clauses $x_2\!\vee\!x_3$ and
$\overline{x}_1$ implied by $F$ are also \lprs{x_1} for $C$
$\blacksquare$
\end{example}

\vspace{-1pt} The two propositions below relate literal redundancy and
(un)satisfiability. They improve the result of~\cite{provinglocally}.
%
%  poposition about literal vicinity
%
%\vspace{-3pt}
\begin{proposition}
  \label{prop:lit_vic}
Let $C$ be a clause of a formula $F$. If $F$ is satisfiable, there
exists a clause $C'$ of the $C$-cluster and a literal $l$ of $C'$
shared with $C$ such that the $l$-vicinity of $C'$ has an assignment
satisfying $F$. (So, $l$ is not redundant in $C'$.)
\end{proposition}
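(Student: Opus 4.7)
The plan is to start from an arbitrary satisfying assignment of $F$ and, after at most one bit-flip, locate the required clause inside the cluster $F_c$. Concretely, I would pick a satisfying assignment $\pnt{a}$ of $F$ that \emph{minimizes}, over all satisfying assignments, the number of literals of $C$ that it satisfies; call this set of literals $L$. Since $\pnt{a}$ must satisfy $C$, we have $L \neq \emptyset$.

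First I would dispatch the easy subcase $|L|=1$. If $L=\{l\}$, then $\pnt{a}$ satisfies $l$ and falsifies every other literal of $C$, so $\pnt{a}$ lies in the $l$-vicinity of $C$ per Definition~\ref{def:vicin}. Taking $C'=C$ (which is trivially in the cluster $F_c$) and noting that $l$ is a literal of both $C$ and $C'$ settles this subcase.

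The genuine content lies in the subcase $|L|\ge 2$. Pick any $l' \in L$ and let $\pent{a}{'}$ be obtained from $\pnt{a}$ by flipping the variable of $l'$. Because any non-tautological clause mentions each variable at most once, this flip alters exactly one literal of $C$, turning $l'$ from satisfied to falsified, so $\pent{a}{'}$ satisfies only $|L|-1$ literals of $C$. By the minimality of $\pnt{a}$, the assignment $\pent{a}{'}$ cannot satisfy $F$; hence some clause $D$ of $F$ is satisfied by $\pnt{a}$ but falsified by $\pent{a}{'}$. The only clauses affected by the flip are those mentioning the variable of $l'$, and for $D$ to fall from satisfied to falsified, $l'$ must be the \emph{unique} literal of $D$ satisfied by $\pnt{a}$ (if the clause contained $\overline{l'}$ instead, the flip would only add a satisfied literal). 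Thus $l'\in D$, so $D$ shares a literal with $C$ and therefore $D\in F_c$ by Definition~\ref{def:prim_sec}; moreover $\pnt{a}$ lies in the $l'$-vicinity of $D$. Taking $C'=D$ and $l=l'$ completes the argument.

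The main obstacle I anticipate is precisely this multi-literal subcase: when more than one literal of $C$ is satisfied, $C$ itself cannot witness the conclusion, so one must reach into the secondary clauses of $F_c$. The minimization of $|L|$ combined with the single-variable flip is what converts such excess satisfaction of $C$ into the existence of a clause $D$ in the cluster whose vicinity contains a satisfying assignment; everything else in the proof is bookkeeping around Definitions~\ref{def:prim_sec} and~\ref{def:vicin}.
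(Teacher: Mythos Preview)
Your argument is correct. The minimality trick---choosing a satisfying assignment that makes as few literals of $C$ true as possible---immediately forces the existence of the cluster clause $D$: if $|L|=1$ you use $C$ itself, and if $|L|\ge 2$ any single flip of a literal in $L$ must break some clause $D$ containing precisely that literal of $C$, which puts $D$ in $F_c$ and puts the \emph{original} assignment $\pnt{a}$ in the $l'$-vicinity of $D$. All steps are sound, including the observation that a tautological $D$ is impossible.

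The paper reaches the same conclusion by a different, more iterative route. It starts from an arbitrary satisfying assignment $\pnt{p}$, restricts the cluster to the subset $\hat{F}_c$ of clauses whose satisfied literals all come from $C$, and then repeatedly flips literals of $\slt{C}{p}$, shrinking $\hat{F}_c$ along the way, until some clause has only one satisfied literal. The paper's proof thus constructs a new satisfying assignment by a sequence of flips and must (implicitly) verify after each flip that every clause of $F$---in $\hat{F}_c$, in $F_c\setminus\hat{F}_c$, and outside $F_c$---remains satisfied. Your approach sidesteps this bookkeeping entirely: the witnessing assignment is the original $\pnt{a}$, so its satisfaction of $F$ is given for free, and the single flip is used only as a \emph{counterfactual} to locate $D$, not to build the witness. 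The paper's argument is closer to an algorithm one might actually run; yours is shorter and makes the invariant ``the witness satisfies $F$'' trivially true.
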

%
% proposition about inductive reasoning
%
%\vspace{-3pt}
\begin{proposition}
  \label{prop:sac_uns}
Let $C$ be a clause of $F$.  Assume that for every clause $C'$ of the
$C$-cluster, every literal $l$ of $C'$ shared with $C$ is
redundant. Then $F$ is unsatisfiable.
\end{proposition}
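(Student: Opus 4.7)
Proposition~\ref{prop:sac_uns} is the contrapositive of Proposition~\ref{prop:lit_vic}, so my plan is to argue by contradiction and invoke that proposition.

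First, I would verify that the notion of ``redundant'' used in Proposition~\ref{prop:sac_uns} and the notion of ``the $l$-vicinity of $C'$ contains a satisfying assignment of $F$'' used in Proposition~\ref{prop:lit_vic} are the appropriate negations of one another. By Definition~\ref{def:vicin}, a literal $l$ of $C'$ is redundant iff the $l$-vicinity of $C'$ contains no full assignment satisfying $F$: the forward direction is stated immediately after the definition of an \lpr{l}, while the converse is immediate, because if the $l$-vicinity has no satisfying assignment then $F$ together with the unit clauses of $\pnt{q}$ (the assignment specifying the vicinity) is unsatisfiable, and by resolution completeness $F$ then implies some clause $B$ falsified by $\pnt{q}$, which is an \lpr{l} for $C'$.

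With this identification in hand, I would suppose for contradiction that $F$ is satisfiable and apply Proposition~\ref{prop:lit_vic} to the same clause $C$. This produces a clause $C'$ of the $C$-cluster $F_c$ and a literal $l\in C'$ shared with $C$ whose \vic{l} of $C'$ contains an assignment satisfying $F$; equivalently, $l$ is not redundant in $C'$. This directly contradicts the hypothesis of Proposition~\ref{prop:sac_uns} that every such $l$ is redundant, so $F$ must be unsatisfiable.

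The only substantive point is the definitional equivalence in the first paragraph; once that is settled, the argument is a one-line contrapositive of Proposition~\ref{prop:lit_vic} and I anticipate no further obstacle. The genuine content of the statement therefore lies in Proposition~\ref{prop:lit_vic}, whose proof (handled separately) I would approach by picking a satisfying assignment $\pnt{x}$ of $F$ minimising the number of literals of $C$ that it satisfies, and showing that either $\pnt{x}$ already lies in the \vic{l} of $C$ itself for that unique satisfied $l$, or else flipping any one of the satisfied literals $l$ of $C$ must break some clause $D\in F_c$ containing $l$ as its only satisfied literal, placing $\pnt{x}$ in the \vic{l} of $D$.
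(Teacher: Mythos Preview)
Your proof of Proposition~\ref{prop:sac_uns} is correct and identical to the paper's: both assume $F$ satisfiable, invoke Proposition~\ref{prop:lit_vic}, and obtain a contradiction. The definitional check in your first paragraph is more than is needed, since Definition~\ref{def:vicin} already identifies ``$l$ is redundant in $C'$'' with ``the $l$-vicinity of $C'$ contains no assignment satisfying $F$,'' and the paper simply uses this directly without appealing to resolution completeness. As for your sketch of Proposition~\ref{prop:lit_vic}, it differs from the paper's argument: the paper fixes an arbitrary satisfying assignment and iteratively flips satisfied literals of $C$ (shrinking an auxiliary set $\hat{F}_c$ along the way), whereas your minimality argument---choosing a satisfying assignment with the fewest satisfied literals of $C$ and observing that any flip must break some cluster clause at exactly that literal---reaches the same conclusion in one step and is arguably cleaner.
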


From the viewpoint of \mprop, derivation of an empty clause by a
SAT-solver based on pure resolution \tb{is overkill}. Such a SAT
solver proves redundancy of \ti{any subset} of literals in \ti{every}
clause involved in this derivation.

If a literal $l$ is redundant in $C'$, there is an \lpr{l} for
$C'$. Let $P$ be a formula consisting of clauses certifying redundancy
of the literals mentioned in \mprop. Since, by definition, every
certificate is implied by $F$, then $F \imp P$. As we show in
Example~\ref{exmp:prop}, $P$ can be satisfiable. So, the proof of
unsatisfiability by \mprop is \tb{formula-specific}. Indeed, if some
formula $F$ implies a \ti{satisfiable} formula, this does not make $F$
unsatisfiable. (One can only claim that $F$ is falsified by every
assignment falsifying $P$.)  \mprop actually employs \tb{inductive
  reasoning}: the fact that $F$ is unsatisfiable in the part of the
search space falsifying $P$ implies the unsatisfiability
everywhere. If \mprop applies in a subspace \pnt{q}, a clause implied
by $F$ and falsified by \pnt{q} can be generated as a confirmation of
the unsatisfiability of $F$ in this subspace (see
Section~\ref{sec:sac_sat_alg}).
%\vspace{-2pt}

\begin{example}
\label{exmp:prop} This example demonstrates an application
of \mprop. Let $F$ be equal to $C_1 \wedge C_2 \wedge C_3 \wedge\dots$
where $C_1 = x_1 \vee \overline{x}_2$, $C_2 = x_1 \vee x_5$ and $C_3 =
\overline{x}_2 \vee \overline{x}_6 \vee x_8$.  Assume that
$C_1,C_2,C_3$ are the only clauses of $F$ with literals $x_1$ and/or
$\overline{x}_2$ whereas $F$ can contain any number of clauses with
literals $\overline{x}_1$ and/or $x_2$.  Consider the $C_1$-cluster
$F_{c_1}$ equal to \s{C_1,C_2,C_3} where $C_1$ is the primary clause
and $C_2,C_3$ are the secondary ones.
According to \mprop, to prove $F$ unsatisfiable, it suffices to show
that the single literals $x_1$ and $\overline{x}_2$ are redundant in
$C_1$ and the literals $x_1$ and $\overline{x}_2$ are redundant in
$C_2$ and $C_3$ respectively.  That is there is no assignment
satisfying $F$ in subspaces $(\bm{x_1\!=\!1},x_2\!=\!1)$,
$(x_1\!=\!0,\bm{x_2\!=\!0})$ and $(\bm{x_1\!=\!1},x_5\!=\!0)$ and
$(\bm{x_2\!=\!0},x_6\!=\!1,x_8\!=\!0)$.

Assume that $F \imp P$ where $P = B'_1 \wedge B''_1 \wedge B_2 \wedge
B_3$ and $B'_1 = \overline{x}_1 \vee \overline{x}_2$, $B''_1 = x_1
\vee x_2$, $B_2 = \overline{x}_1 \vee x_5$, $B_3 = x_2 \vee
\overline{x}_6 \vee x_8$. Here $B'_1,B''_1$ are $x_1$-certificate and
$\overline{x}_2$-certificate for $C_1$ and $B_2,B_3$ are
$x_1$-certificate and $\overline{x}_2$-certificate for $C_2,C_3$
respectively. Note that $P$ is satisfiable whereas $F$ is
unsatisfiable due to \mprop. $P$ remains satisfiable even when
conjoined with $F_{c_1}$. So, one cannot derive an empty clause from
$P \wedge F_{c_1}$ by resolution and needs to use \ti{the rest} of the
formula $F$. The complexity of this additional resolution derivation
can be quite high (see below).

The assignments satisfying $P\!\wedge\!F_{c_1}$ are
($x_1\!=\!1,x_2\!=\!0,$\linebreak$x_5\!=\!1,x_6\!=\!0$) and
($x_1\!=\!1,x_2\!=\!0,x_5\!=\!1,x_8\!=\!1$). They satisfy at least two
literals of $C_1$,$C_2$,$C_3$. These assignments are eliminated by
adding a single \ti{blocked} clause $K = \overline{x}_1 \vee x_2 \vee
\overline{x}_5$. (It is blocked in $F$ at $x_1$ since it is
unresolvable with $C_1$ and $C_2$ on $x_1$.) So, $P \wedge F_{c_1}
\wedge K$ is unsatisfiable and an empty clause can be derived from it
in a few resolutions.  However, the derivation of $K$ \ti{itself} by
pure resolution can be \ti{difficult}. (As mentioned before,
resolution + blocked clauses is exponentially more powerful than
resolution~\cite{gen_ext_resol}. So, derivation of even a single
blocked clause by resolution can be exponentially long.) Note that by
claiming that $F$ is unsatisfiable, \mprop, in a sense, uses the power
of the blocked clause $K$ \ti{implicitly} \ie without producing it and
adding to $F$ $\blacksquare$
\end{example}

The proposition below states that the set of literals one needs to
prove redundant to prove $F$ unsatisfiable can be smaller than that of
\mprop.

\begin{proposition}
  \label{prop:optim}
Let $C$ be a clause of a formula $F$ and $F_c$ denote the $C$-cluster.
Let $F'_c$ and $F''_c$ be two disjoint subsets of $F_c$ such that $C
\in F'_c$ and $F_c = F'_c \cup F''_c$. Assume that for every clause
$C' \in F'_c$ and for every literal $l$ of $C'$ shared with $C$ there
is an \lpr{l} implied by $F \setminus F''_c$. Then $F \setminus F''_c$
(and hence $F$) is unsatisfiable.
\end{proposition}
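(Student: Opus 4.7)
The plan is to reduce Proposition~\ref{prop:optim} to \mprop by applying the latter to the smaller formula $F \setminus F''_c$ with the same primary clause $C$. The key preliminary observation is that the $C$-cluster is computed relative to the ambient formula, so it shrinks when clauses are deleted.

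First I would pin down the $C$-cluster of $F \setminus F''_c$. By Definition~\ref{def:prim_sec}, any clause outside the original cluster $F_c$ shares no literal with $C$, and deleting clauses from $F$ cannot create new literal overlaps with $C$. Hence the $C$-cluster of $F \setminus F''_c$ is exactly $F_c \setminus F''_c$, which equals $F'_c$, since $F'_c$ and $F''_c$ are disjoint, their union is $F_c$, and $C \in F'_c$.

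Next I would verify the redundancy hypothesis of \mprop for $F \setminus F''_c$. Take any clause $C' \in F'_c$ and any literal $l$ of $C'$ shared with $C$. By assumption there is an \lpr{l} $B$ for $C'$ with $F \setminus F''_c \imp B$. By Definition~\ref{def:vicin}, the shortest assignment \pnt{q} specifying the \vic{l} of $C'$ falsifies $B$; combined with $F \setminus F''_c \imp B$, this means no extension of \pnt{q} satisfies $F \setminus F''_c$, i.e.\ $l$ is redundant in $C'$ relative to $F \setminus F''_c$. This is precisely what \mprop requires, with $F$ replaced by $F \setminus F''_c$ and the $C$-cluster taken to be $F'_c$.

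Applying \mprop then yields that $F \setminus F''_c$ is unsatisfiable, and since $F \setminus F''_c$ is a subset of the clauses of $F$, unsatisfiability transfers to $F$. The only subtle point — and thus the main thing to be careful about — is the strengthened implication condition: the certificates must be implied by $F \setminus F''_c$ rather than merely by $F$. This is exactly the condition written into the hypothesis of the proposition, and it is what licenses the reduction to \mprop inside the smaller formula without ever needing to invoke clauses from $F''_c$.
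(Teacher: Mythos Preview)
Your proposal is correct and follows essentially the same approach as the paper's own proof: both set $F' = F \setminus F''_c$, observe that the $C$-cluster of $F'$ is exactly $F'_c$, check that the assumed certificates (implied by $F'$) give the literal-redundancy hypothesis of \mprop for $F'$, and then apply \mprop to conclude that $F'$ (and hence $F$) is unsatisfiable. Your write-up is in fact more explicit than the paper's, which simply asserts that $F'_c$ is the $C$-cluster of $F'$ without spelling out why.
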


%\vspace{-5pt}
Informally, $F'_c$ is the subset of $F_c$ \ti{involved} in generation
of certificates. Reducing the checks of literal redundancy to the
clauses of $F'_c$ is especially important if an algorithm based on
\sasat adds learned clauses to $F$ (since the $C$-cluster can grow
large).  Proposition~\ref{prop:optim} shows that inductive reasoning
can be efficient \tb{even if} the $C$-cluster is \ti{very large}.

%\vspace{-15pt}
%\section{Description Of \sasat Procedure}
\section{An Illustrative Implementation Of \sasat}
\label{sec:sac_sat_alg}
Earlier, we introduced the technique called \sasat that can be used to
design structure-aware SAT solvers tuned to particular
applications. So far, we have been working on simple implementations
of \sasat in Python~\cite{python} trying to answer various emerging
questions (see Subsection~\ref{ssec:questions}). In this section, we
provide an illustrative implementation of \sasat that we will refer to
as \bm{\imsat}. Here '\ti{im}' stands for 'implementation'.  We give a
description of \imsat just to show how one can combine literal
redundancy checks and induction. (See Appendix~\ref{app:sac_sat_exmpl}
for an example of how \imsat works on a simple formula.)
%
%
%
%\vspace{-3pt}
\subsection{A description of \imsat}
Let $F$ be the formula to solve.  Like a CDCL solver, \imsat learns
new clauses. The difference is at least threefold here. First, \imsat
checks \ti{redundancy of literals} instead of just searching for a
satisfying assignment. Second, \imsat claims $F$ to be unsatisfiable
in a subspace \pnt{q} if \ti{induction} applies (\mprop). Third,
\imsat maintains two separate sets of clauses: the formula $F$ and a
set $P$ of certificate clauses.

%
% \dpqe procedure
%
\setlength{\intextsep}{4pt}
\setlength{\textfloatsep}{2pt}
\begin{wrapfigure}{L}{1.6in}
%\begin{figure}
%\begin{center}
%\small
\footnotesize
%\normalsize
\vspace{-5pt}
\begin{tabbing}
aa\=bb\= cc\= dd\= \kill
$\imsat(F,P,\pnt{q},C,l)$\{\\
\scriptsize{1}\> $(\mi{redund},\pnt{q})\!:=\!\mi{BCP}(F\!\wedge\!P,\pnt{q},C,l)$ \\
\scriptsize{2}\> if ($\mi{redund}$) \{\\
\scriptsize{3}\Tt  $\Sub{B}{res} := GenByRes(F \wedge P,\pnt{q},C,l)$ \\
\scriptsize{4}\Tt return($F,P,\Sub{B}{res},\mi{nil}$)\} \\
\scriptsize{5}\> if $(\mi{Satisf}(F,\pnt{q}))$ return($F,P,\mi{nil},\pnt{q}$) \\
\scriptsize{6}\> while ($\mi{True}$) \{ \\
\scriptsize{7}\Tt  $C,l := \mi{PickClsLit}(F,P,\pnt{q})$ \\
\scriptsize{8}\Tt  $(F,P,B,\pnt{s})\!:=\!\imsat(F,P,\pnt{q},C,l)$  \\[2pt]
\scriptsize{9}\Tt if $(\pnt{s} \neq \mi{nil})$ return($F,P,\mi{nil},\pnt{s}$)\\
\scriptsize{10}\Tt if $(\mi{Falsif}(B,\pnt{q}))$ return($F,P,B,\emptyset$) \\
\scriptsize{11}\Tt $F,P := \mi{AddCls}(F,P,B)$ \\
\scriptsize{12}\Tt $C' := \mi{Induct}(F,P,\pnt{q})$ \\
\scriptsize{13}\Tt if ($C' \neq \mi{nil}$) \{ \\
\scriptsize{14}\ttt  $\Sub{B}{ind}\!:=\!\mi{FormCls}(F,P,C',\pnt{q})$ \\
\scriptsize{15}\ttt    return($F,P,\Sub{B}{ind},\mi{nil}$)\}\}\} \\
%\tb{\scriptsize{1}}\>   \\
\end{tabbing} 
\vspace{-15pt}
%\caption{Top procedure}
\caption{\imsat}
\vspace{3pt}
\label{fig:top_proc_sat}
%\end{figure}
\end{wrapfigure}

%\subsection{Top procedure of \imsat}
\imsat calls itself recursively. A recursive call accepts the current
formula $F$, the current set $P$ of certificate clauses, an assignment
\pnt{q} specifying the current subspace, a clause $C \in F$ and a
literal $l$ of $C$ specifying the $l$-vicinity to explore (see
Fig.~\ref{fig:top_proc_sat}).  In the initial call,
$P\!=\!\emptyset,\pnt{q}\!=\!\emptyset$, $C=\mi{nil}$, $l = \mi{nil}$.
If $F$ is satisfiable in subspace \pnt{q}, then \imsat returns a
satisfying assignment. Otherwise, it returns an $l$-certificate for
the clause $C$ in subspace \pnt{q}. Besides, \imsat returns the
(modified) formulas $F$ and $P$.

\imsat first runs BCP on $F\!\wedge\!P$ to make assignments specifying
the $l$-vicinity of $C$ (line 1). If BCP shows that $l$ is redundant,
a $l$-certificate \Sub{B}{res} is built by resolution and then
returned by \imsat (lines 2-4). For instance, $l$ is redundant if BCP
runs into a \ti{conflict}, see
Example~\ref{exmp:cnfl_cls}. (Appendix~\ref{app:red_by_bcp} lists two
cases of redundancy of $l$ that are identified by BCP \ti{before} any
conflict occurs.) If \pnt{q} extended by BCP satisfies $F$, then $l$
is not redundant and \imsat terminates returning \pnt{q} (line 5).

If \imsat does not terminate after BCP, it runs a \ti{while} loop
where literal redundancies of clauses of $F$ are checked (lines
6-15). The loop starts with picking a clause $C$ and an unassigned
literal $l$ of $C$ not proved redundant in subspace \pnt{q} yet (line
7). Then \imsat is recursively called to check the satisfiability of
$F$ in subspace \pnt{q} in the $l$-vicinity of $C$ (line 8). This call
returns either a clause $B$ that is an $l$-certificate for $C$ in
subspace \pnt{q} (so $l$ is redundant) or an assignment \pnt{s}
satisfying $F$ ($l$ is not redundant).

If a satisfying assignment is found, \imsat terminates\linebreak (line
9). \imsat also terminates if $B$ \ti{itself} is falsified by \pnt{q}
(line 10). (This case simulates regular SAT solving by CDCL.)
Otherwise, $B$ is added to $F$ \ti{or} $P$ (line 11).  In either case,
$B$ is used in BCP. But if $B$ is added to $P$ instead of $F$, it
\ti{is not present} in any $C'$-cluster where $C' \in F$ even if $B$
shares literals with $C'$. So, literals of $B$ are \tb{never checked}
for redundancy.  Finally, \imsat checks if proving $l$ redundant in
$C$ and some previous proofs of literal redundancy have enabled the
induction specified by \mprop for a $C'$-cluster (line 12). If such a
$C'$-cluster is found, \imsat generates an induction clause
\Sub{B}{ind} falsified by \pnt{q} and terminates (lines
13-15). Generation of \Sub{B}{ind} is explained in
Example~\ref{exmp:ind_cls} and formally described in
Appendix~\ref{ssec:cnfl_cls_props}.

\begin{example}
 \label{exmp:cnfl_cls}
Here is an example of literal redundancy identified by BCP when a
conflict occurs. Let $F = C_1 \wedge C_2 \wedge C_3 \wedge \dots$
where $C_1 = x_1 \vee x_2 \vee x_3$, $C_2 = x_1 \vee x_2 \vee x_4$,
$C_3 = x_2 \vee \overline{x}_4$.  Assume that \imsat needs to check
the redundancy of $x_3$ in $C_1$ and the current assignment \pnt{q} is
empty. Then \imsat makes a recursive call with $C=C_1$ and $l = x_3$.
It explores the subspace $(x_1=0,x_2=0,\bm{x_3=1})$ specifying the
$x_3$-vicinity of $C_1$.  Making the decision assignment $x_1=0$ does
not produce any unit clauses (assuming that only clauses $C_1$ and
$C_2$ of $F$ have literal $x_1$) but adding the decision assignment
$x_2\!=\!0$ does. Namely, BCP derives $x_3\!=\!1$ from $C_1$ and the
conflicting assignments $x_4\!=\!1$ and $x_4\!=\!0$ from $C_2$ and
$C_3$ \ie runs into a \ti{conflict}. By resolving $C_2$ and $C_3$, one
produces the conflict clause $x_1 \vee x_2$ that serves as an
\lpr{x_3} for $C_1$.
\end{example}

\begin{example}
  \label{exmp:ind_cls} Here we illustrate the generation of a
clause \Sub{B}{ind} by \imsat when \ti{induction} applies in a
\tb{subspace}. Let $F$ be equal to $C_1 \wedge \dots \wedge C_4 \wedge
\dots$ where $C_1\!=\!\overline{x}_1\vee x_2\vee x_3$,
$C_2\!=\!\overline{x}_1 \vee x_5 \vee x_7$, $C_3\!=\!x_2 \vee
\overline{x}_6 \vee x_8$ and $C_4 = x_3 \vee x_9$. Assume that only
clauses $C_1,\dots,C_4$ of $F$ have literals $\overline{x}_1$, $x_2$
and $x_3$. Consider the $C_1$-cluster in subspace
$\pnt{q}\!=\!(x_1\!=\!1,x_4\!=\!0$,
\linebreak$x_9\!=\!1,x_{10}=0,x_{11}=0)$ \ie $C_1$ is the primary
clause. To apply \mprop in subspace \pnt{q}, one can ignore the
literals of $C_1$ falsified by \pnt{q} and discard the clauses
satisfied by \pnt{q} (see Proposition~\ref{prop:cnfl_subsp1} of
Appendix~\ref{app:proofs}). The literal $\overline{x}_1$ of $C_1$ is
falsified by \pnt{q} and $C_4$ is satisfied by \pnt{q}. So, it
suffices to check the redundancy of $x_2$ and $x_3$ in $C_1$ and $x_2$
in $C_3$.

Assume that \imsat derived certificates $B'_1 = x_3 \vee x_{10}$,
$B''_1 = x_2 \vee x_4$, $B_3= x_4 \vee \overline{x}_6 \vee x_8$.  Here
$B'_1$ and $B''_1$ are \lpr{x_2} and $x_3$-certificate for $C_1$ in
subspace \pnt{q} and $B_3$ is an $x_2$-certificate for $C_3$ in
subspace \pnt{q}. Then \mprop applies in subspace \pnt{q} and \imsat
generates the clause\linebreak $\Sub{B}{ind}=\overline{x}_1 \vee x_4
\vee \overline{x}_9 \vee x_{10}$ that is implied by $F$ and falsified
by \pnt{q}. The clause \Sub{B}{ind} consists of three parts (see
Proposition~\ref{prop:cnfl_cls1} of Appendix~\ref{app:proofs}). First,
it includes the literals of the primary clause $C_1$ falsified by
\pnt{q}. In our case, it is $\overline{x}_1$. Second, for every clause
of the $C_1$-cluster satisfied by \pnt{q}, \Sub{B}{ind} contains the
negation of a satisfied literal. In our case, it is the negation of
$x_9$ from $C_4$.  Finally, for each \lpr{l} above showing that the
literal $l$ is redundant in a cluster clause $C_i$, \Sub{B}{ind}
includes all the literals of this certificate but those of $C_i$.  For
instance, $B_3= x_4 \vee \overline{x}_6 \vee x_8$ is the
$x_2$-certificate produced for $C_3 = x_2 \vee \overline{x}_6 \vee
x_8$ in the subspace \pnt{q}.  So, \Sub{B}{ind} includes the literal
$x_4$ of $B_3$ since it is not in $C_3$ $\blacksquare$
\end{example}

\subsection{A few open questions related to \sasat (out of many)} 
\label{ssec:questions}
In this subsection, we list some questions that have not been
addressed in the description of \imsat.
One obvious question here is: is there a good \ti{general} heuristic
for picking the next clause $C \in F$ and a literal $l$ of $C$ to
check for redundancy? And how does one form a \ti{particular}
heuristic tuned to a class of formulas? Now, suppose $C$ and $l$ are
chosen in the current subspace \pnt{q}. What does one do with the
clauses previously derived to certify the redundancy of $l$ in $C$ in
subspaces visited earlier?  Should one extend \pnt{q} to satisfy all
those certificate clauses before examining the $l$-vicinity of $C$ in
the current subspace?  This would be similar to the decision-making of
Chaff~\cite{chaff}.  Another important question to answer is how one
decides whether a newly learned clause is added to the formula $F$ or
to the set of certificates $P$. How does this decision vary depending
on the structure of the formula?

Note that the questions above are not just technicalities. Answering
them requires a better understanding of the nature of structure-aware
SAT solving offered by \sasat.

%\input{r1eusing_pqe}
%\input{b1ackground}
%\vspace{-10pt}
\section{Conclusions}
\label{sec:concl}
%\vspace{-10pt}
%\input{e3g_pqe.fig}
All practical algorithms of hardware verification use the structure of
the formula at hand (often \ti{inadvertently}).  We show that
\ti{intentional} structure-aware computing (SAC) can be done by
partial quantifier elimination (PQE).  Interpolation, an early example
of SAC, can be viewed as a special case of PQE. We demonstrate that
SAC by PQE enables powerful verification algorithms. We also show that
PQE \ti{itself} can be made more structure-aware. As a first step to
achieving this goal, we introduce a technique that facilitates the
design of structure-aware SAT algorithms. Arguably, this technique can
be used for structure-aware PQE solving. Our discussion and results
suggest that PQE can be successfully used in the design of powerful
structure-aware algorithms.

%\clearpage
%\bibliographystyle{plain}
\bibliographystyle{IEEEtran}
\bibliography{short_sat,local,l1ocal_hvc}
%\newpage
%\clearpage
\appendices
%\vspace{4pt}
%% \appendix
%% \noindent{\large \tb{Appendix}}
%\input{e2xmp_of_sac}
%\vspace{-10pt}
\section{Interpolation For Satisfiable Formulas}
%\vspace{-5pt}
\label{app:interp}
Traditionally, interpolation is introduced via the notion of
implication. Suppose that a formula $A(X,Y)$ implies a formula
$\overline{B(Y,Z)}$ where $X,Y,Z$ are disjoint sets of variables. Then
there exists a formula $I(Y)$ called an interpolant such that $A \imp
I$ and $I \imp \overline{B}$. In terms of satisfiability this means
that $A \wedge B \equiv I \wedge B \equiv 0$ \ie $A \wedge B$ and $I
\wedge B$ are unsatisfiable. Interpolation can be used to simplify the
explanation why the implication $A \imp \overline{B}$ holds (by
replacing it with $I \imp \overline{B}$) or why $A \wedge B$ is
unsatisfiable (by replacing it with $I \wedge B$).

Let $A \not\imp \overline{B}$ \ie $A$ \tb{does not} imply
$\overline{B}$ and hence the formula $A \wedge B$ is \tb{satisfiable}.
Suppose we want to build an interpolant $I(Y)$ \ie a formula that
simplifies explaining why $A \not\imp \overline{B}$ and $A \wedge B
\not\equiv 0$ (by replacing $A$ with $I$). One cannot just mimic the
approach above by looking for a formula $I$ satisfying\linebreak $A
\imp I$ and $I \not\imp \overline{B}$. The problem here is that even
$I \equiv 1$ satisfies these implications for arbitrary $A$ and $B$
for which $A \not\imp \overline{B}$ holds.

Fortunately, one can build $I$ using PQE similarly to the case where
$A \wedge B$ is unsatisfiable (see Section~\ref{sec:interp}). Namely,
by taking $A$ out of the scope of quantifiers in \prob{W}{A \wedge B}
where $W = X \cup Z$. Let $A^*(Y)$ be a solution to this PQE problem
\ie $\prob{W}{A \wedge B} \equiv A^* \wedge \prob{W}{B}$. If $A \imp
A^*$, then $A^*$ is an interpolant in the following sense: every
assignment (\pnt{y},\pnt{z}) satisfying $A^* \wedge B$ can be extended
to (\pnt{x},\pnt{y},\pnt{z}) satisfying $A \wedge B$ and vice versa.

Indeed, if (\pnt{x},\pnt{y},\pnt{z}) satisfies $A(X,Y) \wedge B(Y,Z)$,
then it satisfies $B$ and $A^*$. (The latter holds, because $A \imp
A^*$.)  So, (\pnt{x},\pnt{y},\pnt{z}) satisfies $A^* \wedge B$ and
hence (\pnt{y},\pnt{z}) satisfies it too. On the other hand, if
(\pnt{y},\pnt{z}) satisfies $A^* \wedge B$, then it satisfies
$B$. Besides, there must be an \pnt{x} such that (\pnt{x},\pnt{y})
satisfies $A$. (Otherwise, $\prob{W}{A \wedge B} \neq A^* \wedge
\prob{W}{B}$ under assignment \pnt{y}.) Then (\pnt{x},\pnt{y},\pnt{z})
satisfies $A \wedge B$.

The relation between satisfying assignments of $A \wedge B$ and $A^*
\wedge B$ justifies using $A^*$ to explain why $A \wedge B \not\equiv
0$ and\linebreak $A \not\imp \overline{B}$ (by replacing $A$ with
$A^*$).  As mentioned earlier, if $A \not\imp A^*$, one can use the
technique of~\cite{ken03} to extract an interpolant from a resolution
derivation of $A^*$. The latter can be generated by the PQE solver
that produced $A^*$.
%\newpage

%\vspace{-20pt}
\section{Proofs Of New Propositions}
\setcounter{proposition}{0}
 \label{app:proofs}

%\subsection{Proofs of Section~\ref{sec:basic}}
%
%  proof of redundancy of a blocked clause
%
\stepcounter{proposition}
%\clearpage
%\vspace{-7pt}
\subsection{Propositions from  Section~\ref{sec:pqe_alg}}
%
%  properties of a solution for PQE
%
\begin{proposition}
%\label{prop:sol_for_pqe}
  Formula $H(Y)$ is a solution to the PQE problem of taking 
  $G$ out of \prob{X}{F(X,Y)} (\ie $ \prob{X}{F} \equiv H \wedge
  \prob{X}{F \setminus G}$) iff
  \begin{enumerate}
  \item $F \imp H$ and
  \item $H \wedge
    \prob{X}{F} \equiv H \wedge \prob{X}{F \setminus G}$
  \end{enumerate}
\end{proposition}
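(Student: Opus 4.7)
My plan is to prove the biconditional by splitting into the two directions, using only two elementary facts about existential quantification: (a) $F \imp \exists X[F]$ (since removing an assignment to $X$ only weakens the formula), and (b) for any formula $K(Y)$ not depending on $X$, the formulas $K \wedge \exists X[F]$ and $\exists X[K \wedge F]$ agree (so $\exists X[F] \imp K$ iff $F \imp K$).

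For the ``only if'' direction, I would assume $\exists X[F] \equiv H \wedge \exists X[F \setminus G]$ and derive conditions (1) and (2). For (1), chain $F \imp \exists X[F] \equiv H \wedge \exists X[F \setminus G] \imp H$; since $H$ depends only on $Y$, this gives $F \imp H$. For (2), conjoin $H$ on both sides of the solution equation: $H \wedge \exists X[F] \equiv H \wedge H \wedge \exists X[F \setminus G] \equiv H \wedge \exists X[F \setminus G]$.

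For the ``if'' direction, I would assume (1) and (2) and derive the solution equation. The key observation is that $F \imp H$ together with $H$ being a formula over $Y$ alone gives $\exists X[F] \imp H$, and hence the absorption identity $\exists X[F] \equiv H \wedge \exists X[F]$. Substituting this into (2) yields the chain
\begin{equation*}
\exists X[F] \;\equiv\; H \wedge \exists X[F] \;\equiv\; H \wedge \exists X[F \setminus G],
\end{equation*}
which is precisely the defining equation for $H$ being a solution.

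There is no real obstacle here; the proposition is essentially a book-keeping exercise decomposing the definition of a PQE solution into an implication part (clause (1)) and a redundancy part (clause (2)). The only subtlety worth being explicit about is the step $F \imp H \Rightarrow \exists X[F] \imp H$, which relies on $H$ being quantifier-free in $X$, i.e., $H$ depending only on $Y$ as stated in Definition~\ref{def:pqe_prob}. Everything else is algebraic manipulation with $\wedge$ and $\equiv$.
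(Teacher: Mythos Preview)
Your proposal is correct and follows essentially the same route as the paper. The ``if'' direction is identical in spirit (the paper writes $H \wedge \prob{X}{F} \equiv \prob{X}{H \wedge F} \equiv \prob{X}{F}$, you phrase the same absorption as $\prob{X}{F} \imp H$ hence $\prob{X}{F} \equiv H \wedge \prob{X}{F}$), and for condition~(2) of the ``only if'' direction both you and the paper simply conjoin $H$ to both sides. The only cosmetic difference is in condition~(1) of the ``only if'' direction: the paper argues by contradiction via a witnessing assignment, whereas your direct chain $F \imp \prob{X}{F} \imp H$ is a bit cleaner but amounts to the same observation.
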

\begin{proof}
 \noindent\tb{The ``if'' part.} Assume conditions 1 and 2 hold. Since
 $H$ depends only on $Y$, then $H \wedge \prob{X}{F} \equiv \prob{X}{H
   \wedge F}$. Since $F \imp H$, then $\prob{X}{H \wedge F} \equiv
 \prob{X}{F}$. Then the second condition entails that $\prob{X}{F}
 \equiv H \wedge \prob{X}{F \setminus G}$.

\noindent\tb{The ``only if'' part.} Assume $\prob{X}{F} \equiv H
\wedge \prob{X}{F \setminus G}$. Let us show that conditions 1 and 2
hold.  Assume condition 1 fails \ie $F \not\imp H$. Then there is an
assignment (\pnt{x},\pnt{y}) satisfying $F$ and falsifying $H$. This
means that $\prob{X}{F} \neq$ \linebreak $H \wedge \prob{X}{F
  \setminus G}$ under assignment \pnt{y} and we have a
contradiction. To prove condition 2, one can simply conjoin both sides
of the equality $ \prob{X}{F} \equiv H \wedge \prob{X}{F \setminus G}$
with $H$
\end{proof}
%
%
%\subsection{Proofs of Section~\ref{sec:prev_res}}
%
%  proof of a propistion about equivalence checking
%

\stepcounter{proposition}

\subsection{Propositions from Section~\ref{sec:pqe_mc}}
%
%  reachability diameter
%
\begin{proposition}
%  \label{prop:rd_by_pqe}
Let $k \ge 1$. Let \prob{\Abs{k}}{I_1 \wedge I_2 \wedge T_{1,k}} be a
formula where $I_1$ and $I_2$ specify the initial states of $N$ in
terms of variables of $S_1$ and $S_2$ respectively.  Then $\di{N,I} <
k$ iff $I_2$ is redundant in \prob{\Abs{k}}{I_1 \wedge I_2 \wedge
  T_{1,k}}.
\end{proposition}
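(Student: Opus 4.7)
\smallskip

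\noindent\textbf{Proof proposal.} The plan is to interpret both sides of the redundancy claim as sets of states at time $k{+}1$, identify those sets with $R_{k-1}$ and $R_k$ respectively, and then argue $R_{k-1}=R_k \iff \di{N,I}<k$ using the stuttering assumption.

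First, unfolding the definition of redundancy, $I_2$ is redundant in \prob{\Abs{k}}{I_1 \wedge I_2 \wedge T_{1,k}} iff
\[
\prob{\Abs{k}}{I_1 \wedge I_2 \wedge T_{1,k}} \equiv \prob{\Abs{k}}{I_1 \wedge T_{1,k}}.
\]
By the convention of Subsection~\ref{ssec:defs}, the right-hand side, viewed as a predicate on $S_{k+1}$, is exactly $R_k$. So the task reduces to showing that the left-hand side is exactly $R_{k-1}$, and that $R_{k-1}=R_k \iff \di{N,I}<k$.

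Second, I would identify the set $A\subseteq S_{k+1}$ defined by $\prob{\Abs{k}}{I_1 \wedge I_2 \wedge T_{1,k}}$ with $R_{k-1}$. For the inclusion $A\subseteq R_{k-1}$, if $\pent{s}{k+1}\in A$ there exist $\pent{s}{1},\ldots,\pent{s}{k+1}$ with $I(\pent{s}{1})$, $I(\pent{s}{2})$ and $T(\pent{s}{i},\pent{s}{i+1})=1$ for $i=1,\ldots,k$. Dropping $\pent{s}{1}$ yields a valid length-$(k{-}1)$ run from the initial state $\pent{s}{2}$ to $\pent{s}{k+1}$, so $\pent{s}{k+1}\in R_{k-1}$. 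Conversely, if $\pent{s}{k+1}\in R_{k-1}$, there is a run $\pent{s}{2},\ldots,\pent{s}{k+1}$ with $I(\pent{s}{2})=1$; setting $\pent{s}{1}:=\pent{s}{2}$ gives $I(\pent{s}{1})=1$ and, by stuttering, $T(\pent{s}{1},\pent{s}{2})=T(\pent{s}{2},\pent{s}{2})=1$, which witnesses $\pent{s}{k+1}\in A$. This step is where the assumption that $N$ stutters is essential and is, I expect, the main subtlety to get right.

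Third, I would close by proving $R_{k-1}=R_k \iff \di{N,I}<k$. The forward direction follows from Remark~\ref{rem:stutter}: if every reachable state needs at most $k{-}1$ transitions, then $R_{k-1}$ already contains all reachable states, and since $R_{k-1}\subseteq R_k$ is contained in the reachable set, both are equal. For the converse, under stuttering the sequence $R_0\subseteq R_1\subseteq\cdots$ is monotone and $R_{j+1}$ is the image of $R_j$ under $T$ (the image contains $R_j$ itself via the self-loop). Hence $R_{k-1}=R_k$ means $R_{k-1}$ is closed under $T$ and contains $I$, so by induction $R_j=R_{k-1}$ for all $j\ge k-1$, giving $\di{N,I}\le k-1<k$. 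The whole argument hinges on the two uses of stuttering—to realize any $R_{k-1}$-state inside $A$, and to equate ``reachable in exactly $j$'' with ``reachable in at most $j$''—so ensuring Remark~\ref{rem:stutter} is consistently invoked is the only nontrivial point.
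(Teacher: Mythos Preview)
Your proof is correct and follows essentially the same route as the paper: identify the two projected formulas with $R_k$ and $R_{k-1}$ (the paper does the latter by observing that $I_1$ and $T(S_1,S_2)$ become redundant once $I_2$ is present, which is exactly your stuttering argument with $\pent{s}{1}:=\pent{s}{2}$), then reduce to $R_{k-1}=R_k\iff \di{N,I}<k$. One cosmetic slip: in your third step you label as ``forward direction'' the implication $\di{N,I}<k\Rightarrow R_{k-1}=R_k$ and as ``converse'' the implication $R_{k-1}=R_k\Rightarrow\di{N,I}<k$, which is the opposite of the iff you wrote; the arguments themselves are fine.
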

\begin{proof}
\noindent\tb{The ``if'' part.} Recall that $\Abs{k} = S_1 \cup \dots
\cup S_k$ and $T_{1,k} = T(S_1,S_2) \wedge \dots \wedge
T(S_k,S_{k+1})$. Assume that $I_2$ is redundant in \prob{\Abs{k}}{I_1
  \wedge I_2 \wedge T_{1,k}} \ie \prob{\Abs{k}}{I_1 \wedge T_{1,k}}
$\equiv$ \prob{\Abs{k}}{I_1 \wedge I_2 \wedge T_{1,k}}.  The formula
\prob{\Abs{k}}{I_1 \wedge T_{1,k}} is logically equivalent to $R_k$
specifying the set of states of $N$ reachable in $k$ transitions. On
the other hand, \prob{\Abs{k}}{I_1 \wedge I_2 \wedge T_{1,k}} is
logically equivalent to \prob{\aabs{2}{k}}{I_2 \wedge T_{2,k}}
specifying the states reachable in $k\!-\!1$ transitions (because
$I_1$ and $T(S_1,S_2)$ are redundant in \prob{\Abs{k}}{I_1 \wedge I_2
  \wedge T_{1,k}}). So, redundancy of $I_2$ means that $R_{k-1} \equiv
R_k$ and hence, $\di{N,I} < k$.

\vspace{3pt}
\noindent\tb{The ``only if'' part.} Assume the contrary \ie the
condition $\di{N,I} < k$ holds but $I_2$ is \ti{not} redundant in
\prob{\Abs{k}}{I_1 \wedge I_2 \wedge T_{1,k}}. Then there is an
assignment \linebreak $\pnt{p}=(\ppnt{s}{1},\dots,\ppnt{s}{k+1})$ such
that a) \pnt{p} satisfies $I_1\, \wedge\, T_{1,k}$;\linebreak b)
formula $I_1 \wedge I_2 \wedge T_{1,k}$ is unsatisfiable in subspace
\ppnt{s}{k+1}. (So, $I_2$ is not redundant because removing it from
$I_1 \wedge I_2 \wedge T_{1,k}$ makes the latter satisfiable in
subspace \ppnt{s}{k+1}.)  Condition 'b)' means that \ppnt{s}{k+1} is
unreachable in $k\!-\!1$ transitions whereas condition 'a)' implies
that \ppnt{s}{k+1} is reachable in $k$ transitions. Hence $\di{N,I}
\ge k$ and we have a contradiction
\end{proof}

\subsection{Propositions from Section~\ref{sec:sac_props}}

\begin{proposition}
  %\label{prop:lit_vic}
Let $C$ be a clause of a formula $F$. If $F$ is satisfiable, there
exists a clause $C'$ of the $C$-cluster and a literal $l$ of $C'$
shared with $C$ such that the $l$-vicinity of $C'$ has an assignment
satisfying $F$. (So, $l$ is not redundant in $C'$.)
\end{proposition}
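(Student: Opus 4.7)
The plan is to argue by a minimality argument on satisfying assignments. Let $\vec{x}^{*}$ be a satisfying assignment of $F$ that minimizes the number of literals of $C$ it satisfies. Since $\vec{x}^{*} \models C$, this minimum $k$ is at least $1$, and I then split into two cases based on $k$.

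First, if $k=1$, let $l$ be the unique literal of $C$ satisfied by $\vec{x}^{*}$. Then every other literal of $C$ is falsified by $\vec{x}^{*}$, so $\vec{x}^{*}$ lies in the $l$-vicinity of $C$ itself. Taking $C' := C$ settles this case, with $l$ trivially shared between $C$ and $C'$.

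Second, if $k \geq 2$, pick any literal $l$ of $C$ satisfied by $\vec{x}^{*}$, and let $\vec{x}^{**}$ be the assignment obtained from $\vec{x}^{*}$ by flipping the variable of $l$. Under $\vec{x}^{**}$, the clause $C$ is still satisfied (by one of its other satisfied literals), but with strictly fewer satisfied literals than under $\vec{x}^{*}$. By the minimality of $\vec{x}^{*}$, the assignment $\vec{x}^{**}$ cannot satisfy $F$. Hence some clause $C'' \in F$ is falsified by $\vec{x}^{**}$. Because only clauses containing the literal $l$ could have lost a satisfying literal through the flip, $C''$ must contain $l$, and furthermore every other literal of $C''$ must already have been falsified by $\vec{x}^{*}$ (otherwise $C''$ would remain satisfied). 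Consequently $C''$ shares the literal $l$ with $C$, so $C''$ belongs to the $C$-cluster, and $\vec{x}^{*}$ lies in the $l$-vicinity of $C''$. Taking $C' := C''$ completes this case.

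The only subtlety to watch is the direction of sharing: I must make sure that $l$ appears \emph{with the same polarity} in both $C$ and $C''$, which is automatic here since $l$ itself (not its negation) is a literal of both $C$ and $C''$ by construction. I do not expect any serious obstacle; the main idea is simply that a satisfying assignment with a minimum number of literals satisfied in $C$ must witness the unsatisfiability-under-flipping of some cluster clause, and that witness is exactly the desired $l$-vicinity assignment.
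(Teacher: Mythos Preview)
Your proof is correct. Both your argument and the paper's rest on the same underlying mechanism—flip a literal of $C$ that is satisfied by a given satisfying assignment and look at which clause breaks—but you package it as a minimality argument: by choosing upfront a satisfying assignment $\vec{x}^{*}$ with the fewest satisfied literals of $C$, a single flip in the case $k\ge 2$ immediately exposes the desired cluster clause $C''$. The paper instead runs an explicit iterative procedure: it starts from an arbitrary satisfying assignment $\vec{p}$, maintains a subset $\hat{F}_c$ of the $C$-cluster, and repeatedly flips satisfied literals of $C$ (pruning $\hat{F}_c$ along the way) until some clause in $\hat{F}_c$ has exactly one satisfied literal. Your version is shorter and avoids the bookkeeping of $\hat{F}_c$; the paper's version is more algorithmic in flavor but is longer and leaves somewhat implicit the verification that the flipped assignment still satisfies all of $F$ at each step—precisely the point your minimality choice handles in one stroke.
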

\begin{proof}
Denote the $C$-cluster of $F$ as $F_c$.  Let \pnt{p} be an assignment
satisfying $F$. Denote by $\hat{F}_c$ the subset of $F_c$ obtained
from the latter by removing every clause that has a literal \ti{not
  shared} with $C$ and satisfied by \pnt{p}.  (This is either a
literal of a variable that is not in \V{C} or the negation of a
literal of $C$ falsified by \pnt{p}.) Note that the set $\hat{F}_c$ is
not empty since it at least includes the primary clause $C$.  Assume
there is a clause $C' \in \hat{F}_c$ for which \pnt{p} satisfies only
one literal. Then \pnt{p} is the satisfying assignment we look for
because, by definition of $\hat{F}_c$, this literal is shared with
$C$.

Now assume there is no clause $C'$ above. Then every clause of
$\hat{F}_c$ has at least two literals from \slt{C}{p} where \slt{C}{p}
denotes the literals of $C$ satisfied by \pnt{p}.  Below, we show that
by flipping values of \pnt{p} satisfying literals of \slt{C}{p}, one
eventually obtains a required satisfying assignment.  Let $l(x)$ be a
literal of \slt{C}{p} where $x \in X$.  Denote by \pent{p}{^*} the
assignment obtained from \pnt{p} by flipping the value of $x$.  If
there is a clause of $\hat{F}_c$ for which \pent{p}{^*} satisfies only
one literal, then it is a required satisfying assignment. (By
construction, this literal is shared with $C$.) If not, we proceed as
described below.

First, we remove from $\hat{F}_c$ the clauses with the literal
$\overline{l(x)}$. Note that the modified $\hat{F}_c$ is not empty
since it contains at least the primary clause $C$. At this point every
clause of the modified set $\hat{F}_c$ has at least two literals from
$\sslt{C}{p}$. Besides, the number of literals of $C$ satisfied by
\pent{p}{^*} is reduced by one in comparison to \pnt{p} due to
flipping the value of $x$.  Now we pick a new literal $l(x^*)$ from
\sslt{C}{p} and flip the value of $x^*$. Eventually such a procedure
will produce an assignment that satisfies only one literal of either
the primary clause $C$ itself or some other clause from $\hat{F}_c$
\end{proof}
\begin{proposition}
%\label{prop:sac_uns}
Let $C$ be a clause of $F$.  Assume that for every clause $C'$ of the
$C$-cluster, every literal $l$ of $C'$ shared with $C$ is
redundant. Then $F$ is unsatisfiable.
\end{proposition}
\begin{proof}
Assume the contrary \ie $F$ is satisfiable. Then
Proposition~\ref{prop:lit_vic} entails that there exists a clause $C'$
of the $C$-cluster and a literal $l$ of $C'$ shared with $C$ such that
$l$ is not redundant in $C'$. So, we have a contradiction
\end{proof}

\begin{proposition}
  %  \label{prop:optim}
Let $C$ be a clause of a formula $F$ and $F_c$ denote the $C$-cluster.
Let $F'_c$ and $F''_c$ be two disjoint subsets of $F_c$ such that $C
\in F'_c$ and $F_c = F'_c \cup F''_c$. Assume that for every clause
$C' \in F'_c$ and for every literal $l$ of $C'$ shared with $C$ there
is an \lpr{l} implied by $F \setminus F''_c$. Then $F \setminus F''_c$
(and hence $F$) is unsatisfiable.
\end{proposition}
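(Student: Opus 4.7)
The plan is to reduce this to Proposition~\ref{prop:sac_uns} applied to the restricted formula $\tilde{F} := F \setminus F''_c$. The key observation is that once we remove the clauses in $F''_c$, the assumption gives precisely what Proposition~\ref{prop:sac_uns} needs, but now for $\tilde{F}$ instead of $F$.

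First, I would verify that $C$ is still a clause of $\tilde{F}$: by hypothesis $C \in F'_c$ and $F'_c \cap F''_c = \emptyset$, so $C \in F \setminus F''_c = \tilde{F}$. Second, I would compute the $C$-cluster of $\tilde{F}$ and show it equals $F'_c$. A clause of $\tilde{F}$ lies in this cluster iff it is $C$ itself or shares at least one literal with $C$; any such clause lies in the $C$-cluster $F_c$ of the original $F$, and removing $F''_c$ from $F_c$ leaves exactly $F'_c$ (since $F_c = F'_c \cup F''_c$ disjointly and $C \in F'_c$). Conversely, every clause in $F'_c$ is either $C$ or a member of $F_c \setminus \{C\}$, hence shares a literal with $C$, and it belongs to $\tilde{F}$ because $F'_c \cap F''_c = \emptyset$.

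Third, I would translate the hypothesis into the statement needed by Proposition~\ref{prop:sac_uns}. For every $C' \in F'_c$ and every literal $l$ of $C'$ shared with $C$, there exists an \lpr{l} for $C'$ implied by $F \setminus F''_c = \tilde{F}$. By Definition~\ref{def:vicin}, this means the $l$-vicinity of $C'$ contains no assignment satisfying $\tilde{F}$, i.e.\ $l$ is redundant in $C'$ viewed as a clause of $\tilde{F}$. Thus the premises of Proposition~\ref{prop:sac_uns} hold with $\tilde{F}$ in place of $F$ and with $F'_c$ as the $C$-cluster. Applying that proposition, $\tilde{F}$ is unsatisfiable, and since $\tilde{F} \subseteq F$, so is $F$.

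The only delicate point I expect is step two: making sure the $C$-cluster of $\tilde{F}$ is exactly $F'_c$ rather than a strict subset of it. This relies essentially on the disjointness $F'_c \cap F''_c = \emptyset$ and on the fact that sharing a literal with $C$ is a property intrinsic to the clauses (independent of which formula they sit in), so no clause of $F'_c$ gets inadvertently dropped when passing from $F$ to $\tilde{F}$. Once this bookkeeping is in place, the rest is a direct invocation of the previous proposition.
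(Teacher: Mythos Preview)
Your proposal is correct and follows exactly the same route as the paper: apply Proposition~\ref{prop:sac_uns} to the restricted formula $F \setminus F''_c$, after noting that $C$ lies in it and that its $C$-cluster is precisely $F'_c$. The paper's proof is terser, simply asserting that $F'_c$ is the $C$-cluster of $F \setminus F''_c$ without spelling out the bookkeeping you rightly flagged as the only delicate point.
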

\begin{proof}
Denote by $F'$ the formula $F \setminus F''_c$.  Note that $C$ is in
$F'$ and $F'_c$ is the $C$-cluster for $F'$. By our assumption, for
every clause $C'$ of $F'_c$ and every literal $l$ of $C'$ shared with
$C$, one can derive an \lpr{l} from $F'$. Hence \mprop holds for
$F'$. Then $F'$ (and so $F$) is unsatisfiable
\end{proof}

\subsection{Propositions in support of  Section~\ref{sec:sac_sat_alg}}
\label{ssec:cnfl_cls_props}
\begin{proposition}
\label{prop:cnfl_subsp1}
Let $C$ be a clause of a formula $F(X)$ and \pnt{q} be an assignment
to $X$ that does not satisfy/falsify $C$.  Let $P$ be a set of
certificates.  Suppose that for every clause $C'$ of the $C$-cluster
of $F$ that is not satisfied by \pnt{q} and for every literal $l$ of
$C'$ shared with $C$ and unassigned by \pnt{q}, $P$ contains a clause
that is an \lpr{l} for $C'$ in subspace \pnt{q}. Then $F$ is
unsatisfiable in subspace \pnt{q}.
\end{proposition}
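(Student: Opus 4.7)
The plan is to reduce Proposition~\ref{prop:cnfl_subsp1} to the already established \mprop by applying the latter to the cofactor \cof{F}{q}. The intuition is that ``$F$ is unsatisfiable in subspace $\vec{q}$'' is synonymous with ``\cof{F}{q} is unsatisfiable'', and the subspace notion of \lpr{l} for $C'$ should correspond to the ordinary (unrestricted) notion of \lpr{l} for \cof{C'}{q} in \cof{F}{q} after cofactoring by $\vec{q}$.

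First, I would dispose of the trivial case: if $\vec{q}$ falsifies every literal of $C$, then \cof{C}{q} is the empty clause, \cof{F}{q} is unsatisfiable, and we are done. Otherwise, \cof{C}{q} is a non-empty clause whose literals are exactly the literals of $C$ left unassigned by $\vec{q}$. I would then identify the \cof{C}{q}-cluster of \cof{F}{q}: a clause $D$ of \cof{F}{q} belongs to this cluster iff $D = \cof{C'}{q}$ for some $C' \in F$ that is not satisfied by $\vec{q}$ and such that $D$ shares a literal with \cof{C}{q}. Any such shared literal must already be in $C$ (having survived the cofactor) and in $C'$, so it is a literal of $C$ shared with $C'$ and unassigned by $\vec{q}$; in particular, $C'$ is then a clause of the $C$-cluster of $F$ to which the hypothesis of the proposition applies.

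Next, I would verify that the certificates carry over. If $B \in P$ is an \lpr{l} for such a $C'$ in subspace $\vec{q}$, then, since $F \imp B$, also $\cof{F}{q} \imp \cof{B}{q}$; and the short assignment that specifies the $l$-vicinity of \cof{C'}{q} inside \cof{F}{q} is precisely the restriction to the unassigned variables of the subspace-$\vec{q}$ $l$-vicinity assignment, which by hypothesis falsifies $B$ and hence falsifies \cof{B}{q}. Thus \cof{B}{q} is an ordinary \lpr{l} for \cof{C'}{q} in \cof{F}{q}, so $l$ is redundant in \cof{C'}{q}. Applying \mprop with primary clause \cof{C}{q} then yields that \cof{F}{q} is unsatisfiable, i.e., $F$ is unsatisfiable in subspace $\vec{q}$.

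The main obstacle is the bookkeeping around the definition of ``\lpr{l} in subspace $\vec{q}$'': one needs to pin down that this subspace notion corresponds bijectively to the ordinary notion after cofactoring by $\vec{q}$, and that the literals $l$ ranged over in the hypothesis (literals of $C'$ shared with $C$ and unassigned by $\vec{q}$) match exactly the literals of \cof{C'}{q} shared with \cof{C}{q} inside \cof{F}{q}. Once this correspondence is spelled out, the reduction to \mprop is immediate.
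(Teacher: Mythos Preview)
Your proposal is correct and follows essentially the same route as the paper: reduce to \mprop by passing to the cofactor \cof{F}{q}, observing that the \cof{C}{q}-cluster of \cof{F}{q} and the cofactored certificates \cof{P}{q} satisfy the hypotheses of \mprop. The paper's proof simply asserts this correspondence in one sentence, whereas you spell out the bookkeeping (matching unassigned shared literals with literals of \cof{C'}{q} shared with \cof{C}{q}, and checking that \cof{B}{q} is an \lpr{l} for \cof{C'}{q}); your added detail and the explicit trivial-case check are sound and arguably make the argument cleaner.
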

\begin{proof}
The satisfiability of $F$ in subspace \pnt{q} is equivalent to the
satisfiability of \cof{F}{q}.  (Recall that \cof{F}{q} denotes the
formula obtained from $F$ by removing the clauses satisfied by \pnt{q}
and removing the literals falsified by \pnt{q} from the remaining
clauses.)  Note that the \cof{C}{q}\,-cluster for \cof{F}{q} and the
certificates of \cof{P}{q} satisfy the conditions of
Proposition~\ref{prop:sac_uns}. So, \cof{F}{q} is unsatisfiable.
Hence, $F$ is unsatisfiable in subspace \pnt{q}
\end{proof}
\begin{proposition}
\label{prop:cnfl_cls1}
Assume the conditions of Proposition~\ref{prop:cnfl_subsp1} hold for a
formula $F$, a set of certificates $P$, a clause $C \in F$ and an
assignment \pnt{q}.  Then every clause \Sub{B}{ind} satisfying the
three conditions below is falsified by \pnt{q} and implied by $F$.
First, \Sub{B}{ind} includes every literal of $C$ falsified by
\pnt{q}.  Second, for every clause $C'$ of the $C$-cluster satisfied
by \pnt{q}, \Sub{B}{ind} contains the negation of a literal of $C'$
satisfied by \pnt{q}. Third, for every clause $C'$ of the $C$-cluster
unsatisfied by \pnt{q} and for every literal $l$ of $C'$ shared with
$C$ and unassigned by \pnt{q}, \Sub{B}{ind} contains literals of a
clause $B \in P$ that is an \lpr{l} for $C'$ in subspace
\pnt{q}. Namely, \Sub{B}{ind} contains all the literals of $B$ but
those of $C'$.
\end{proposition}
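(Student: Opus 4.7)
The statement asserts two properties of the constructed clause $\Sub{B}{ind}$: it is falsified by \pnt{q}, and $F$ implies it. For the first, I would inspect each of the three kinds of literals placed in $\Sub{B}{ind}$ in turn. Those from condition~1 are literals of $C$ explicitly falsified by \pnt{q}. Those from condition~2 are negations of literals satisfied by \pnt{q}, so they are falsified. Those from condition~3 are literals of an \lpr{l} $B$ for $C'$ in subspace \pnt{q} that do not lie in $C'$; by the definition of an \lpr{l} in subspace \pnt{q}, $B$ is falsified by the extension of \pnt{q} with the \vic{l} of \cof{C'}{q}, and this extension only assigns $l$ (to $1$) and the remaining unassigned literals of \cof{C'}{q} (to $0$). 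Hence every literal of $B$ outside $C'$ must already be falsified by \pnt{q} itself.

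For the implication $F \imp \Sub{B}{ind}$ I would argue by contradiction: suppose a full assignment \pnt{r} satisfies $F$ while falsifying $\Sub{B}{ind}$. The plan is to translate ``\pnt{r} falsifies $\Sub{B}{ind}$'' into enough agreement between \pnt{r} and \pnt{q} to run Proposition~\ref{prop:cnfl_subsp1} on a smaller subspace that \pnt{r} extends. Concretely, condition~1 forces \pnt{r} to agree with \pnt{q} on every literal of $C$ falsified by \pnt{q}; condition~2 forces \pnt{r} to preserve, for each clause of the $C$-cluster satisfied by \pnt{q}, the particular satisfying literal whose negation was placed in $\Sub{B}{ind}$; condition~3 forces \pnt{r} to agree with \pnt{q} on every literal of each certificate that lies outside the corresponding $C'$. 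I would then define the partial assignment \pent{q}{'} as the restriction of \pnt{q} to the variables on which \pnt{q} and \pnt{r} coincide, verify that the hypothesis of Proposition~\ref{prop:cnfl_subsp1} holds for $(F,P,C,\pent{q}{'})$, and apply that proposition to conclude that $F$ is unsatisfiable on subspace \pent{q}{'}. Since \pnt{r} extends \pent{q}{'}, this contradicts \pnt{r} satisfying $F$.

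The main obstacle I expect is the last piece of the hypothesis check for \pent{q}{'}: showing that every clause of $P$ that was an \lpr{l} for $C'$ in subspace \pnt{q} is still an \lpr{l} for $C'$ in the (possibly smaller) subspace \pent{q}{'}. Some literals of $C'$ assigned by \pnt{q} may be unassigned in \pent{q}{'} (precisely on variables where \pnt{q} and \pnt{r} disagree), so the new \vic{l} in \pent{q}{'} can assign strictly more variables than the original one. The check reduces to the same per-literal classification used in the first part: literals of $B$ outside $C'$ were falsified by \pnt{q} on variables where \pnt{r} agrees with \pnt{q}, so they remain falsified by \pent{q}{'}; literals of $B$ inside $C'$ become falsified either by \pent{q}{'} or by the enlarged vicinity. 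Together with the side check that any clause $C'$ satisfied by \pnt{q} is still satisfied by \pent{q}{'} (so condition~3 of Proposition~\ref{prop:cnfl_subsp1} need not be re-verified on it), this is the main technical step.
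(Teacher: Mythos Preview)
Your proposal is correct and follows the same overall strategy as the paper: verify that every literal placed in \Sub{B}{ind} is falsified by \pnt{q}, then reduce the implication $F\imp\Sub{B}{ind}$ to an application of Proposition~\ref{prop:cnfl_subsp1} in a sub-subspace of \pnt{q}.

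The difference is in the choice of that sub-subspace. You argue by contradiction, fix a full assignment \pnt{r} satisfying $F$ and falsifying \Sub{B}{ind}, and take \pent{q}{'} to be the restriction of \pnt{q} to variables on which \pnt{q} and \pnt{r} agree. The paper instead takes the \emph{shortest} assignment falsifying \Sub{B}{ind}, i.e., the partial assignment determined directly by the literals of \Sub{B}{ind}; this is a subset of \pnt{q} (by the first part) and is contained in your \pent{q}{'}. With this canonical choice, the verification of Proposition~\ref{prop:cnfl_subsp1} is immediate: the clauses of the $C$-cluster satisfied by \pnt{q} stay satisfied (condition~2 put the relevant literal into \Sub{B}{ind}); the shared literals of $C$ falsified by \pnt{q} stay falsified (condition~1 put them into \Sub{B}{ind}), so the set of unassigned shared literals is unchanged; and each certificate $B$ remains an \lpr{l} because its literals outside $C'$ are in \Sub{B}{ind} while its literals inside $C'$ are handled by the (enlarged) vicinity. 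Your argument establishes the same facts, but the detour through a hypothetical satisfying assignment and the agreement set \pent{q}{'} is unnecessary; the paper's choice makes the ``main obstacle'' you anticipated disappear almost by construction.
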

\begin{proof}
The assignment \pnt{q} falsifies \Sub{B}{ind} because the latter
consists only of literals falsified by \pnt{q}. Let \pnt{r} denote the
shortest assignment falsifying the clause \Sub{B}{ind}. By
construction, \pnt{r} is a subset of \pnt{q}. To prove that
\Sub{B}{ind} is implied by $F$, it suffices to show that $F$ is
unsatisfiable in subspace \pnt{r}. Note that $F$ satisfies the
conditions of Proposition~\ref{prop:cnfl_subsp1} in subspace
\pnt{r}. First, $C$ is not satisfied/falsified by \pnt{r}. Second, for
every clause of $C'$ of the $C$-cluster that is not satisfied by
\pnt{r} and for every literal $l$ of $C'$ shared with $C$ and
unassigned by \pnt{r}, there is a clause of $P$ that is an \lpr{l} for
$C'$ in subspace \pnt{r}.  Hence, $F$ is unsatisfiable in subspace
\pnt{r}
\end{proof}

\section{Tests Detecting Stuck-At Faults And PQE}
\label{app:stuck_at}
In this appendix, we show the relation between tests specified by a
property obtained by PQE (as described in Section~\ref{sec:prop_gen})
and those detecting stuck-at faults. Note that tests detecting
stuck-at faults are circuit-specific.  Assume, for instance, that $M'$
and $M''$ are logically equivalent circuits that are structurally
different. Then a set of tests with a very high coverage of stuck-at
faults for $M'$ can have poor coverage for $M''$.

In this appendix, we reuse the notation of
Section~\ref{sec:prop_gen}. Let $M(X,V,W)$ be a combinational circuit
where $X,V,W$ are the internal, input and output variables
respectively.  Let $F(X,V,W)$ be a formula specifying $M$.
Let $g$ be an AND gate of $M$ whose functionality is $x_3 = x_1 \wedge
x_2$. That is $x_1,x_2$ are the input variables of $g$ and $x_3$ is
its output variable.  The functionality of $g$ is specified by the
formula $C_1 \wedge C_2 \wedge C_3$ where $C_1 = \overline{x}_1 \vee
\overline{x}_2 \vee x_3$, $C_2 = x_1 \vee \overline{x}_3$, $C_3 = x_2
\vee \overline{x}_3$ (see Example~\ref{exmp:gate_cnf}). The clauses
$C_1,C_2,C_3$ are present in formula $F$. Consider taking $C_1$ out of
\prob{X}{F}. This clause makes $g$ produce the output value 1 when its
input values are 1.  (If $x_1$ and $x_2$ are set to 1, the clause
$C_1$ can be satisfied only by setting $x_3$ to 1.)

Let $H(V,W)$ be the property obtained by taking out $C_1$. That is
$\prob{X}{F} \equiv H \wedge \prob{X}{F \setminus \s{C_1}}$. Let
$Q(V,W)$ be a clause of $H$. Assume, for the sake of simplicity, that
$H$ does not have clauses implied by $F \setminus \s{C_1}$. (Such
clauses are redundant in $H$ in the sense that $H$ remains a solution
to the PQE problem above even after removing them.)  Then the formula
$(F \setminus \s{C_1}) \wedge \overline{Q}$ is satisfiable. Let
$(\pnt{x}^*,\pnt{v},\pnt{w}^*)$ be a full assignment satisfying this
formula. Note that this assignment \ti{falsifies} $C_1$.  (Indeed,
assume the contrary. Then $(\pnt{x}^*,\pnt{v},\pnt{w}^*)$ satisfies
$F$ because it already satisfies $F \setminus \s{C_1}$. Since this
assignment falsifies $Q$, we have to conclude that $F \not\imp Q$ and
hence\linebreak $F \not\imp H$. So we have a contradiction.)

The fact that $(\pnt{x}^*,\pnt{v},\pnt{w}^*)$ falsifies $C_1$ and
satisfies $F \setminus \s{C_1}$ means that one can view this
assignment as an execution trace of a ``faulty'' version \Sub{M}{flt}
of $M$ specified by $F \setminus \s{C_1}$. So, the fault of
\Sub{M}{flt} is that the output $x_3$ of gate $g$ is \ti{stuck at}
0. (The clause $C_1$ is falsified when $x_1=1,x_2=1,x_3=0$ i.e. if the
gate $g$ outputs 0 when its input variables are assigned 1. So, $g$
outputs 0 for any value assignment to $x_1$ and $x_2$.)

Let $(\pnt{x},\pnt{v},\pnt{w})$ be the execution trace of $M$ under
the input \pnt{v}. Note that the output $\pnt{w}$ is different from
$\pnt{w}^*$ above.  Indeed, since $(\pnt{x},\pnt{v},\pnt{w})$
satisfies $F$ and $F$ implies $Q$, then $(\pnt{v},\pnt{w})$ satisfies
$Q$. Taking into account that $(\pnt{v},\pnt{w}^*)$ falsifies $Q$ one
has to conclude that $\pnt{w} \neq \pnt{w}^*$.  So, the test \pnt{v}
\tb{exposes} the stuck-at fault above because \Sub{M}{flt} can produce
an output that is different from the one produced by $M$ under
\pnt{v}. And since the clause $Q$ is falsified by
$(\pnt{v},\pnt{w}^*)$, one can say that the test \pnt{v} is specified
by $Q$.

%\section{Equivalence Checking By PQE}
%\newpage
\section{Computing \rl{i} By \cp}
\label{app:ec_by_pqe}

In Section~\ref{sec:eq_check}, we recalled~\cp, a method of
equivalence checking by PQE introduced in~\cite{fmcad16}. In this
appendix, we describe how \cp computes the formula \rl{i} specifying
relations between cut points of $\mi{Cut}_i$. We reuse the notation of
Section~\ref{sec:eq_check}.
Let formula $F_i$ specify the gates of $M'$ and $M''$ located between
their inputs and $\mi{Cut}_i$ (see Fig.~\ref{fig:cp_pqe}). Let $Z_i$
denote the variables of $F_i$ minus those of $\mi{Cut}_i$.  Then
\rl{i} is obtained by taking \rl{i-1} out of \prob{Z_i}{\rl{i-1}
  \wedge F_i} \ie $\prob{Z_i}{\rl{i-1} \wedge F_i} \equiv \rl{i}
\wedge \prob{Z_i}{F_i}$.

\setlength{\intextsep}{4pt}
%\begin{wrapfigure}{l}{2.2in}
\begin{figure}[h]
 \begin{center}
    \includegraphics[width=2in]{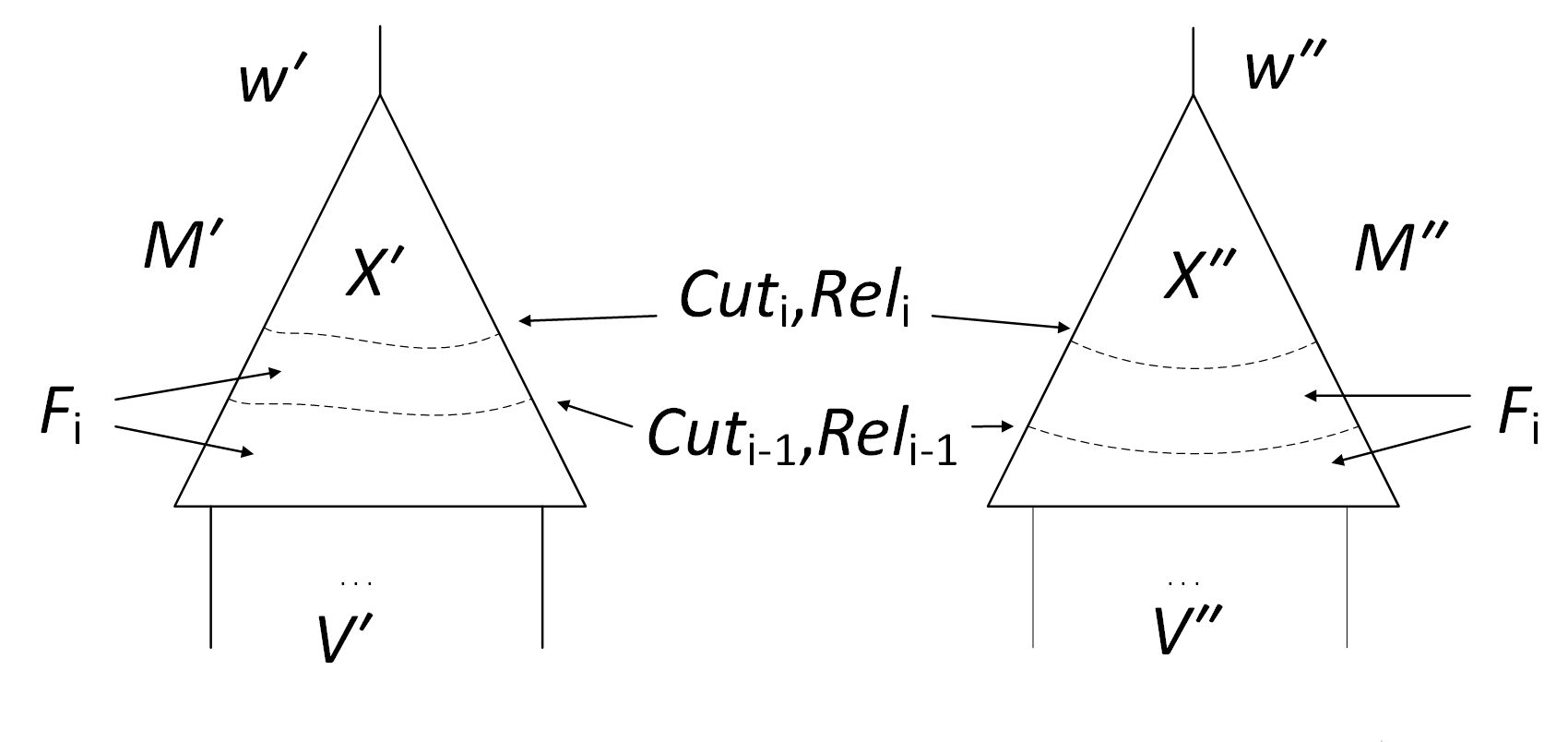}
  \end{center}
%\vspace{-10pt}
\caption{Computing \rl{i} in \cp}
\vspace{3pt}
\label{fig:cp_pqe}
\end{figure}
%\end{wrapfigure}

The formula \rl{i} depends only on variables
of $\mi{Cut}_i$. (The other variables of $\rl{i-1} \wedge F_i$ are in
$Z_i$ and hence, quantified.)
Since \rl{i} is obtained by taking out \rl{i-1}, the latter is
redundant in \prob{Z_i}{\rl{i-1} \wedge \rl{i} \wedge F_i}. One can
show that this property \ti{implies} the property mentioned in
Section~\ref{sec:eq_check}: \rl{i-1} is redundant in \prob{Z}{\rl{i-1}
  \wedge \rl{i} \wedge F}. The latter is just a \ti{weaker} version of
the former.

% Plan
%  *) reason why current PQE solvers have to add conflict clauses
%  *) description of \sapqe
%     +) main operation and its difference from \sasat
%     +) three differences from current PQE solvers
%        -) no variable splitting on quantified variables
%        -) induction allows one to keep formula intact,
%           which address the flaw mentioned above

\section{Using Technique Of \sasat To Fix PQE Solving}
\label{app:sac_pqe}
In this appendix, we give an idea of how the problem with PQE solving
mentioned in Section~\ref{sec:pqe_flaws} can be fixed using the
technique of \sasat. We will refer to the ``fixed'' PQE solver as
\sapqe.  Consider the PQE problem of taking $G$ out of
\prob{X}{F(X,Y)} where $G \in F$.  We will use \dpqe as a
representative of existing PQE solvers.  \dpqe maintains the following
invariant: once a subspace \pnt{q} is reached via variable splitting,
the clauses of $G$ and the current local target clauses must be
proved/made redundant in \prob{X}{F} (see
Subsection~\ref{ssec:targets}). So, if a conflict occurs in subspace
\pnt{q}, a conflict clause \ti{has to} be added to $F$ to
\ti{guarantee} that the target clauses are redundant in \prob{X}{F} in
this subspace.

The difference of \sapqe from \dpqe is threefold. First, \sapqe
replaces splitting on a quantified variable with exploring an
$l(x)$-vicinity of a clause $C$ where $x \in X$, to check the
redundancy of the literal $l(x)$ in $C$. (However, \sapqe still splits
on variables of $Y$. Like in \dpqe, the unquantified variables are
assigned first.) Second, \sapqe adds a learned clause $B$ to $F$
\ti{only} if $B$ is unquantified \ie $\V{B} \subseteq Y$. If $B$ is
quantified, it is added to a set $P$ of clauses certifying literal
redundancy like it is done in \sasat. This means that \sapqe maintains
the invariant above that every target clause must be proved/made
redundant in the current subspace \pnt{q} \ti{only} if $\Va{q}
\subseteq Y$. (Like for \dpqe, the unquantified clauses added to $F$,
form a solution to the PQE problem at hand.)  Third, \sapqe benefits
from checking literal redundancy by applying induction enabled by
\mprop. Using induction is precisely the reason why \sapqe can avoid
adding quantified clauses to $F$.

There is also difference between checking literal redundancy in \sapqe
and \sasat. It is twofold.  First, the redundancy of $l(x)$ in a
clause $C$ is checked by \sapqe in a subspace \pnt{q} only if $C$ is a
target clause (see Subsection~\ref{ssec:targets}). Second, \sapqe
operates differently from \sasat when the former fails to prove $l(x)$
redundant.  Namely, \sasat shows that $l(x)$ is not redundant in
subspace \pnt{q} by finding a satisfying assignment whereas \sapqe
simply proves $C$ blocked at $x$ in this subspace. The latter means
that $C$ is redundant in \prob{X}{F} in subspace \pnt{q} and thus the
redundancy of literal $l(x)$ in $C$ in this subspace is
\ti{irrelevant}.

More specifically, to prove $l(x)$ redundant in \prob{X}{F} in
subspace \pnt{q}, \sapqe checks literal redundancy in the clauses of
$F$ resolvable on $x$ with $C$ in subspace \pnt{q}. If the set of such
clauses is empty, $C$ is trivially blocked at $x$ in subspace \pnt{q}.
Otherwise, \sapqe either derives a certificate clause proving $l(x)$
redundant in $C$ in subspace \pnt{q} or shows that every clause of $F$
resolvable with $C$ on $x$ is redundant in \prob{X}{F} in subspace
\pnt{q}. Then $C$ is blocked at $x$ in this subspace.

\section{An Example Of How \imsat Operates}
\label{app:sac_sat_exmpl}
%\stepcounter{example}
%\input{s10at_exmpl.fig}
%\ti{Example \theexample}.
%\vspace{-5pt}
 
In this appendix, we show how \imsat proves a formula
unsatisfiable. Let $F\!=\!C_1 \wedge \dots \wedge C_9$ where
$C_1\!=\!x_1 \vee x_2$, $C_2 = x_1 \vee x_3$, $C_3 = x_2 \vee x_4$,
$C_4 = \overline{x}_1 \vee x_3$, $C_5 = \overline{x}_2 \vee x_4$, $C_6
= \overline{x}_1 \vee x_5$, $C_7 = \overline{x}_5 \vee
\overline{x}_4$, $C_8 = \overline{x}_2 \vee x_6$, $C_9 =
\overline{x}_6 \vee \overline{x}_3$.
Note that $C_1,C_2,C_3$ are the only clauses of $F$ having literals
$x_1$ or $x_2$ \ie those of \V{C_1}. So, they form the $C_1$-cluster.

The execution trace of a run of \imsat on $F$ is shown in
Fig.~\ref{fig:prot}. The pseudocode of \imsat is presented in
Fig.~\ref{fig:top_proc_sat} of Section~\ref{sec:sac_sat_alg}.
In the initial call of \imsat, the set of certificates $P$, the
current assignment \pnt{q} to $X$ are empty. The clause $C$ and the
literal $l$ are $\mi{nil}$, which means that initially there is no
$l$-vicinity to examine (line 1 of Fig.~\ref{fig:prot}). (The
assignment \pnt{q} stays empty for the entire execution trace of
Fig.~\ref{fig:prot}.  The formula $F$ does not change either whereas
the set of certificates $P$ grows.)  Then \imsat runs \ti{BCP} (line
2). Since $F \cup P$ does not contain any unit clauses and
$C=\mi{nil}$, $l=\mi{nil}$, no assignment is made by \ti{BCP}. After
that, \imsat starts a \ti{while} loop.

%\newpage
%
% \dpqe procedure
%
\setlength{\intextsep}{4pt}
\setlength{\textfloatsep}{2pt}
\begin{wrapfigure}{L}{1.7in}
%\begin{figure}
%\begin{center}
%\small
\footnotesize
%\normalsize
%\vspace{5pt}
\begin{tabbing}
aa\=bb\= cc\= dd\= \kill
%\scriptsize{1}\> $P = \emptyset$, $\pnt{q} = \emptyset$ \\
// $\xi$ denotes $F \wedge P,\pnt{q}$\\
// initial call: $C\!=\!\mi{nil},l\!=\!\mi{nil}$ \\
// $\pnt{q}=\emptyset, F = \Sub{F}{init}$, lines 1-24 \\
// $P = \emptyset$, lines 1-6 \\
// $P = \s{B'_1}$, lines 8-12 \\[1pt]
// $P = \s{B'_1,B''_1}$, lines 14-18  \\[1pt]
// $P = \s{B'_1,B''_1,C_4}$, lines 20-24 \\[1pt]
// $P = \s{B'_1,B''_1,C_4,C_5}$, lines 26-28\\
// \\
\scriptsize{1}\> $\imsat(\Sub{F},P,\pnt{q},C,l)$\\[2pt]
\scriptsize{2}\Tt $\mi{BCP}(\xi,C,l) \Rightarrow \mi{no\,confl}$\\
\scriptsize{3} \ttt $C_1,x_1 := \mi{PickClsLit}(F,P,\pnt{q})$\\
\scriptsize{4}\ttt $\imsat(F,P,\pnt{q},C_1,x_1)$ \\[2pt]
\scriptsize{5}\tttt  $\mi{BCP}(\xi,C_1,x_1) \Rightarrow \mi{confl}$  \\
\scriptsize{6}\tttt  $B'_1 := \mi{GenByRes}(\xi,C_1,x_1)$ \\
\scriptsize{7}\ttt $P := P \cup \s{B'_1}$ \\
\scriptsize{8}\ttt  $\mi{Induct}(F,P,\!\pnt{q})$ returns $\mi{nil}$ \\
~~~~~~~~~$----------$ \\
\scriptsize{9} \ttt $C_1,x_2 := \mi{PickClsLit}(F,P,\pnt{q})$\\
\scriptsize{10}\ttt $\imsat(F,P,\pnt{q},C_1,x_2)$\\[2pt]
\scriptsize{11}\tttt  $\mi{BCP}(\xi,C_1,x_2) \Rightarrow \mi{confl}$  \\
\scriptsize{12}\tttt  $B''_1 := \mi{GenByRes}(\xi,C_1,x_2)$ \\
\scriptsize{13}\ttt $P := P \cup \s{B''_1}$ \\
\scriptsize{14}\ttt  $\mi{Induct}(F,P,\pnt{q})$ returns $\mi{nil}$ \\
~~~~~~~~~$----------$ \\
\scriptsize{15} \ttt $C_2,x_1 := \mi{PickClsLit}(F,P,\pnt{q})$\\
\scriptsize{16}\ttt $\imsat(F,P,\pnt{q},C_2,x_1)$\\[2pt]
\scriptsize{17}\tttt  $\mi{BCP}(\xi,C_2,x_1)\Rightarrow \mi{confl}$  \\
\scriptsize{18}\tttt  $C_4 := \mi{GenByRes}(\xi,C_2,x_1)$ \\
\scriptsize{19}\ttt $P := P \cup \s{C_4}$ \\
\scriptsize{20}\ttt  $\mi{Induct}(F,P,\!\pnt{q})$ returns $\mi{nil}$ \\
~~~~~~~~~$----------$ \\
\scriptsize{21} \ttt $C_3,x_2 := \mi{PickClsLit}(F,P,\pnt{q})$\\
\scriptsize{22}\ttt $\imsat(F,P,\pnt{q},C_3,x_2)$\\[2pt]
\scriptsize{23}\tttt  $\mi{BCP}(\xi,C_3,x_2)\Rightarrow \mi{confl}$  \\
\scriptsize{24}\tttt  $C_5 := \mi{GenByRes}(\xi,C_3,x_2)$ \\
\scriptsize{25}\ttt $P := P \cup \s{C_5}$ \\
\scriptsize{26}\ttt  $\mi{Induct}(F,P,\pnt{q})$ returns $C_1$ \\
%~~~~~~~~~$----------$ \\
\scriptsize{27}\tttt  $\Sub{B}{ind}\!:=\!\mi{FormCls}(F,P,C_1,\pnt{q})$ \\
\scriptsize{28}\ttt   return($P,\Sub{B}{ind}$) \\
%\tb{\scriptsize{1}}\>   \\
\end{tabbing} 
\vspace{-15pt}
\caption{Execution trace of \imsat}
\vspace{6pt}
\label{fig:prot}
%\end{figure}
\end{wrapfigure}

In the \ti{while} loop, \imsat first calls \ti{PickClsLit} to pick a
clause of $F$ and a literal of this clause (line 3). Assume the clause
$C_1$ and its literal $x_1$ are picked.  To check the redundancy of
$x_1$ in $C_1$, \imsat recursively calls itself to explore the
$x_1$-vicinity of $C_1$ \ie the subspace
\mbox{$(\bm{x_1\!=\!1},x_2\!=\!0)$} (line 4). The new activation of
\imsat calls \ti{BCP} to make the assignment above to $x_1$ and $x_2$
(line 5).  To satisfy the clause $C_3$ that becomes unit when $x_2=0$,
\ti{BCP} makes the assignment $x_4=1$. The clause $C_6$ becomes unit
when $x_1=1$ and the assignment $x_5=1$ is made by \ti{BCP}. This
leads to falsifying all literals of the clause $C_7$ \ie to a
conflict. By resolving clauses $C_6,C_7$ and $C_3$, \ti{GenByRes}
produces the conflict clause $B'_1 = \overline{x}_1 \vee x_2$ that
also serves as an \lpr{x_1} for $C_1$ (line 6). After that, the
recursive invocation of \imsat terminates returning $B'_1$. The latter
is added to the set of certificates $P$ (line 7). Finally, \imsat
calls \ti{Induct} to check if proving redundancy of $x_1$ in $C_1$
enabled induction (line 8). Namely, whether there is a clause $C'$ of
$F$ such that \mprop holds for the $C'$-cluster. \ti{Induct} returns
\ti{nil} indicating that it failed to find such a clause.

Now assume that in the second iteration of the \ti{while} loop,
\ti{PickClsLit} picks the clause $C_1$ and the literal $x_2$ (line
9). Then the actions similar to those of the first iteration are
performed. Namely, \imsat checks the redundancy of $x_2$ in $C_1$ by
recursively calling itself to examine the $x_2$-vicinity of $C_1$ \ie
the subspace $(x_1=0,\bm{x_2=1})$ (line 10).  This new invocation of
\imsat runs \ti{BCP}, which results in a conflict (line 11). By
resolving clauses $C_8,C_9$ and $C_2$, \ti{GenByRes} produces the
conflict clause $B''_2=x_1 \vee \overline{x}_2$ that also serves as an
\lpr{x_2} for $C_1$. The clause $B''_2$ is added to $P$. Finally,
\ti{Induct} returns \ti{nil} since proving redundancy of $x_2$ in
$C_1$ does not enable induction yet.

Assume that in the third iteration of the \ti{while} loop,
\ti{PickClsLit} picks the clause $C_2$ and the literal $x_1$ (line
15).  Then the actions similar to those of the first two iterations
are performed.  Namely, \imsat checks the redundancy of $x_1$ in $C_2$
by recursively calling itself to examine the $x_1$-vicinity of $C_2$
\ie the subspace $(\bm{x_1=1},x_3=0)$ (line 16). The only difference
is as follows. When the new invocation of \imsat runs BCP, a conflict
involving only the clauses $C_2$ and $C_4$ occurs after assigning
$x_3=0$ (line 17). This means that $C_4 = \overline{x}_1 \vee x_3$ is
an \lpr{x_1} for $C_2 = x_1 \vee x_3$ and there is no need to produce
a \ti{new} certificate clause by resolution (line 18). So, $C_4$ is
added to $P$ (line 19).  Finally, \ti{Induct} returns \ti{nil}
indicating that proving redundancy of $x_1$ in $C_2$ does not enable
induction yet (line 20).

Assume that in the fourth iteration of the \ti{while} loop,
\ti{PickClsLit} picks the clause $C_3$ and the literal $x_2$ (line
21). This iteration is similar to the previous one.  Namely, when
exploring the $x_2$-vicinity of $C_3$ by a new invocation of \imsat,
\ti{BCP} runs into a conflict involving only $C_3$ and $C_5$. This
means that $C_5=\overline{x}_2 \vee x_4$ is an \lpr{x_2} for $C_3=x_2
\vee x_4$ (lines 22-24). So, $C_5$ is added to $P$ (line 25).  The
difference from the three previous iterations however is that
\ti{Induct} returns $C_1$ indicating that \mprop holds for the
$C_1$-cluster in subspace $\pnt{q} = \emptyset$ (line 26). Then an
empty clause \Sub{B}{ind} is generated indicating that $F$ is
unsatisfiable (line 27) and the original call of \imsat terminates
(line 28). \imsat returns the empty clause \Sub{B}{ind} and the set of
certificates $P$ equal to \s{B'_1,B''_1,C_4,C_5}.

Note that the formula $P$ specified by the set of certificates
\s{B'_1,B''_1,C_4,C_5} is satisfiable. It remains satisfiable even if
conjoined with $F_{c_1}$ where $F_{c_1}$ denotes the $C_1$-cluster \ie
\s{C_1,C_2,C_3}.  So the induction step is non-trivial in the sense
that an empty clause cannot be obtained from $P \wedge F_{c_1}$ by
resolution. Note also that $P \wedge F_{c_1}$ becomes unsatisfiable if
the \ti{blocked} clause $\overline{x}_1 \vee \overline{x}_2 \vee
\overline{x}_3$ is added. (It is blocked at $x_1$.)

\section{Identifying Literal Redundancy By BCP}
\label{app:red_by_bcp}
In Example~\ref{exmp:cnfl_cls} of Section~\ref{sec:sac_sat_alg}, we
showed how \imsat proves redundancy of a literal if a conflict occurs
in BCP. In this appendix, we describe two cases where \imsat proves
redundancy of a literal during BCP \ti{without} reaching a conflict.

Let $C$ be a clause of a formula $F$ and $l$ be a literal of
$C$. Suppose \imsat checks the redundancy of the literal $l$ in
subspace \pnt{q}. Example~\ref{exmp:sat_lit} shows how \imsat proves
$l$ redundant in subspace \pnt{q} if a literal of $C$ \ti{different}
from $l$ is \ti{satisfied} by an assignment derived during BCP.
Example~\ref{exmp:fls_lit} demonstrates that the same can be done if
an assignment \ti{falsifying} the literal $l$ \ti{itself} is derived
during BCP.

\begin{example}
  \label{exmp:sat_lit}
Let $F=C_1 \wedge C_2 \wedge C_3 \wedge \ldots$ where $C_1 = x_1 \vee
x_2 \vee x_3$, $C_2 = x_1 \vee x_4$, $C_3 = x_2 \vee
\overline{x}_4$. Suppose \imsat needs to prove redundancy of the
literal $x_3$ in $C_1$ in subspace \pnt{q}. Assume, for the sake of
simplicity, that $\pnt{q} = \emptyset$. The $x_3$-vicinity of $C_1$ is
specified by the assignment\linebreak
$(x_1=0,x_2=0,\bm{x_3=1})$. Suppose, \imsat first makes the decision
assignment $x_1=0$ and then runs BCP.  Then the assignment $x_4=1$ is
derived from $C_2$ and $x_2=1$ is derived from $C_3$. Note that the
latter assignment \ti{satisfies} a literal of $C_1$ different from
$x_3$ (\ie the literal $x_2$). This means that $x_3$ is proved
redundant in $C_1$ before any conflict occurs. The clause $x_1 \vee
x_2$ obtained by resolving $C_2$ and $C_3$ on $x_4$ is an \lpr{x_3}
for $C_1$.
\end{example}

\begin{example}
  \label{exmp:fls_lit}
Consider a different example where $C_3$ is replaced with $C^*_3$. Let
$F= C_1 \wedge C_2 \wedge C^*_3 \wedge \ldots$ where $C_1 = x_1
\vee\!x_2\!\vee x_3$, $C_2\!=\!x_1\!\vee\!x_4$,
$C^*_3\!=\!\overline{x}_3\!\vee\!\overline{x}_4$. Suppose again that
\imsat needs to prove redundancy of the literal $x_3$ in $C_1$ in
subspace $\pnt{q} = \emptyset$. As before, the $x_3$-vicinity of $C_1$
is specified by the assignment\linebreak
$(x_1=0,x_2=0,\bm{x_3=1})$. If \imsat makes the decision assignment
$x_1=0$, BCP derives the assignment $x_4\!=\!1$ from $C_2$ and
$x_3\!=\!0$ from $C^*_3$. Note that the latter assignment
\ti{falsifies} the literal $x_3$ of $C_1$. This means that $x_3$ is
proved redundant in $C_1$ without reaching a conflict. The clause $x_1
\vee \overline{x}_3$ obtained by resolving $C_2$ and $C^*_3$ on $x_4$
is an \lpr{x_3} for $C_1$.
\end{example}

\end{document}